\keywords{Isogeny-based cryptography, Supersingular elliptic curves, Endomorphisms} 
\abstract{
    We investigate the isogeny graphs of supersingular elliptic curves over
\(\FF_{p^2}\) equipped with a \(d\)-isogeny to their Galois conjugate.
These curves are interesting because they are, in a sense,
a generalization of curves defined over \(\FF_p\),
and there is an action of the ideal class group of \(\QQ(\sqrt{-dp})\) on the isogeny graphs.
We investigate constructive and destructive aspects of these graphs in isogeny-based cryptography,
including generalizations of the CSIDH cryptosystem and the Delfs--Galbraith algorithm.

}
\newcommand{\ZZ}{\ensuremath{\mathbb{Z}}\xspace}
\newcommand{\QQ}{\ensuremath{\mathbb{Q}}\xspace}
\newcommand{\FF}{\ensuremath{\mathbb{F}}\xspace}
\newcommand{\FFbar}{\ensuremath{\overline{\mathbb{F}}}\xspace}
\newcommand{\CC}{\ensuremath{\mathbb{C}}\xspace}
\newcommand{\End}{\ensuremath{\operatorname{End}}}
\newcommand{\EC}{\ensuremath{\mathcal{E}}\xspace}
\newcommand{\pconj}[1]{{#1}^{(p)}}
\renewcommand{\neg}[1]{-{#1}}
\newcommand{\dualof}[1]{\widehat{#1}}
\newcommand{\subgrp}[1]{\ensuremath{\left\langle{#1}\right\rangle}}
\newcommand{\OO}{\mathcal{O}}
\newcommand{\OK}{\ensuremath{\OO_K}}
\renewcommand{\SS}[2][p]{\ensuremath{\mathit{SS}_{#2}(#1)}}
\newcommand{\SSpr}[2][p]{\ensuremath{\mathit{SS}_{#2}^{\text{pr}}(#1)}}
\newcommand{\Cl}{\operatorname{Cl}}
\newcommandx{\Dd}[3][1=d,2={\epsilon},3=p]{\ensuremath{\mathcal{D}_{#1,#2}(#3)}}
\newcommandx{\Ddmax}[3][1=d,2={\epsilon},3=p]{\ensuremath{\mathcal{D}_{#1,#2}^\mathrm{max}(#3)}}
\newcommandx{\Ddsub}[3][1=d,2={\epsilon},3=p]{\ensuremath{\mathcal{D}_{#1,#2}^\mathrm{sub}(#3)}}
\newcommand{\Graph}[1]{\ensuremath{\Gamma(#1)}}
\newcommand{\ellGraph}[2][\ell]{\ensuremath{\Gamma_{#1}(#2)}}
\newcommand{\Ell}{\ensuremath{\mathit{Ell}}}  
\newcommand{\fraka}{\ensuremath{\mathfrak{a}}\xspace}
\newcommand{\frakb}{\ensuremath{\mathfrak{b}}\xspace}
\newcommand{\frakd}{\ensuremath{\mathfrak{d}}\xspace}
\newcommand{\frakell}{\ensuremath{\mathfrak{l}}\xspace}
\newcommand{\frakp}{\ensuremath{\mathfrak{p}}\xspace}
\newcommand{\frakt}{\ensuremath{\mathfrak{t}}\xspace}
\begin{document}

\begin{NoHyper}
\articleinformation 
\end{NoHyper}


\section{
    Introduction
}
\label{sec:intro}

Supersingular isogeny graphs of elliptic curves over \(\FF_{p^2}\),
with their rich number-theoretic and combinatorial properties,
are at the heart of an increasing number of 
pre- and post-quantum cryptosystems.
Identifying and exploiting special subgraphs
of supersingular isogeny graphs
is key to understanding their mathematical properties,
and their cryptographic potential.

Isogeny-based cryptosystems
roughly fall into two families.
On one hand,
we have the cryptosystems that work in the full \(\ell\)-isogeny graph
for various \(\ell\),
including
the Charles--Goren--Lauter hash~\cite{2009/Charles--Goren--Lauter},
SIDH~\cite{2011/Jao--De-Feo,2014/De-Feo--Jao--Plut,2016/Costello--Longa--Naehrig},
SIKE~\cite{SIKE},
OSIDH~\cite{2020/Colo--Kohel},
SQISign~\cite{2020/SQISign},
and many more.
These systems take advantage of the fact that the supersingular
\(\ell\)-isogeny graph is a large regular graph with large diameter and
excellent expansion and mixing properties (indeed, it is a Ramanujan graph).

On the other hand,
we have cryptosystems that work in the \(\FF_p\)-subgraph supported on
vertices defined over \(\FF_{p}\) (or with \(j\)-invariants in \(\FF_p\)),
such as CSIDH~\cite{2018/Castryck--Lange--Martindale--Panny--Renes},
CSI-FiSh~\cite{2019/Beullens--Kleinjung--Vercauteren},
and CSURF~\cite{2020/Castryck--Decru}.
These cryptosystems, many of which represent optimizations and extensions
of pioneering work with ordinary curves
due to Stolbunov~\cite{2006/Rostovtsev--Stolbunov,2010/Stolbunov}
and Couveignes~\cite{2006/Couveignes},
take advantage of the fact that the \(\FF_p\)-endomorphism rings of these curves
are an imaginary quadratic ring,
and the ideal class group of this ring
has a convenient and efficiently-computable commutative action 
on the \(\FF_p\)-subgraph.
This group action allows us to define many simple and useful
cryptosystems,
but it also explains the structure of the \(\FF_p\)-subgraph,
allowing us to use it as a cryptanalytic tool~\cite{2016/Delfs--Galbraith}
and as a convenient point-of-reference when exploring structures in the full
supersingular isogeny graph~\cite{2021/Arpin--Camacho-Navarro--Lauter--Lim--Nelson--Scholl--Sotakova}.

This paper investigates a family of generalizations of the
\(\FF_p\)-subgraph,
one for each squarefree integer \(d\).
The key is to recognise that a curve \(\EC/\FF_{p^2}\)
has its \(j\)-invariant in \(\FF_p\) precisely when 
\(\EC\) is isomorphic to the conjugate curve
\(\pconj{\EC}/\FF_{p^2}\) 
defined by \(p\)-th powering the coefficients of \(\EC\),
and the edges of the \(\FF_p\)-subgraph
correspond to isogenies that are compatible with these isomorphisms.
In this paper,
we relax the isomorphisms to \(d\)-isogenies,
and consider the cryptographic consequences.
We obtain a series of distinguished subgraphs of the supersingular
isogeny graph,
each equipped with a free and transitive action by an ideal class group.

We define \emph{\((d,\epsilon)\)-structures}---essentially,
curves with a \(d\)-isogeny to their conjugate---and the 
isogenies between them in~\S\ref{sec:structures}.
While \((d,\epsilon)\)-structures are defined over~\(\FF_{p^2}\), 
in~\S\ref{sec:modular}
we show that they have modular invariants in \(\FF_p\),
and give useful parameterizations for \(d = 2\) and \(3\).
We narrow our focus to supersingular curves in~\S\ref{sec:supersingular},
using the theory of orientations to set up the class group action
on \((d,\epsilon)\)-structures.
We give some illustrative examples of isogeny graphs of
supersingular \((d,\epsilon)\)-structures in~\S\ref{sec:ssgraph},
before turning to cryptographic applications in~\S\ref{sec:crypto}.

Isogeny graphs of \((d,\epsilon)\)-structures
are a natural setting for
variants of CSIDH (and closely related cryptosystems).
We give arguments for the security of such cryptosystems
in~\S\ref{sec:hard-problems}.
We outline a non-interactive key exchange in~\S\ref{sec:NIKE},
generalizing CSIDH (which is the special case \(d = 1\)),
and highlight some of the subtleties that appear when we move to \(d > 1\).
Optimized implementation techniques are beyond the scope of this article.

The isogeny graphs formed by \((d,\epsilon)\)-structures form
interesting geographical features in the full supersingular isogeny
graph.
Charles, Goren, and Lauter
investigated random walks that happen to hit \((d,\pm1)\)-structures
in the security analysis of their hash
function~\cite[\S7]{2009/Charles--Goren--Lauter};
random walks into \((\ell,\pm1)\)-structures
are also key in the path-finding algorithm
of~\cite{2020/Eisentraeger--Hallgren--Leonardi--Morrison--Park}.
Further heuristics in this direction appear
in~\cite{2021/Arpin--Camacho-Navarro--Lauter--Lim--Nelson--Scholl--Sotakova}.
Here, we consider these vertices not in isolation,
but within their own isogeny graphs;
thus, we obtain a series of generalizations of the ``spine''
of~\cite{2021/Arpin--Camacho-Navarro--Lauter--Lim--Nelson--Scholl--Sotakova},
and a broad generalization of the Delfs--Galbraith isogeny-finding
algorithm~\cite{2016/Delfs--Galbraith} in~\S\ref{sec:Delfs--Galbraith}.

\paragraph{Notation and conventions.}
If \(\EC\) is an elliptic curve,
then \(\End(\EC)\) denotes its endomorphism ring
and \(\End^0(\EC)\) denotes \(\End(\EC)\otimes\QQ\).
Each elliptic curve \(\EC/\FF_{p^2}\)
has a Galois-conjugate curve \(\pconj{\EC}\),
defined by \(p\)-th powering all of the coefficients in the defining
equation of \(\EC\).
The curve and its conjugate are connected
by inseparable ``Frobenius'' \(p\)-isogenies
\(\pi_p: \EC \to \pconj{\EC}\)
and \(\pi_p: \pconj{\EC} \to \EC\),
defined by \(p\)-th powering the coordinates
(abusing notation, all inseparable \(p\)-isogenies
will be denoted by \(\pi_p\)).
Observe that \(\pconj{(\pconj{\EC})} = \EC\),
and the composition of \(\pi_p: \EC \to \pconj{\EC}\)
and \(\pi_p: \pconj{\EC} \to \EC\)
is the \(p^2\)-power Frobenius endomorphism
\(\pi_{\EC}\) of \(\EC\).
Conjugation also operates on isogenies: 
each isogeny \(\phi: \EC \to \EC'\)
defined over \(\FF_{p^2}\)
has a Galois conjugate isogeny
\(\pconj{\phi}: \pconj{\EC} \to \pconj{\EC'}\),
defined by \(p\)-th powering all of the
coefficients
in a rational map defining \(\phi\).
We always have
\[
    \pconj{(\pconj{\phi})} = \phi
    \qquad
    \text{and}
    \qquad 
    \pi_p\circ\phi = \pconj{\phi}\circ\pi_p
    \,,
\]
and
conjugation thus gives an isomorphism
of rings between \(\End(\EC)\) and \(\End(\pconj{\EC})\).

\section{
    Curves with a \texorpdfstring{\protect{\lowercase{\(d\)}}}{d}-isogeny to their conjugate
}
\label{sec:structures}

Let \(p > 3\) be a prime,
and \(d\) a squarefree integer prime to \(p\).
Typically, \(p\) is very large
and \(d\) is very small.

We are interested in elliptic curves \(\EC/\FF_{p^2}\)
equipped with a \(d\)-isogeny \(\psi: \EC \to \pconj{\EC}\).
Given any such \(d\)-isogeny \(\psi\),
we have two returning \(d\)-isogenies:
\[
    \pconj{\psi}: \pconj{\EC} \to \EC
    \qquad
    \text{and}
    \qquad
    \dualof{\psi}: \pconj{\EC} \to \EC
    \,.
\]

\begin{definition}
    Let \(\EC/\FF_{p^2}\) be an elliptic curve
    equipped with a \(d\)-isogeny
    \(\psi: \EC \to \pconj{\EC}\)
    to its conjugate.
    We say that \((\EC,\psi)\) is a \emph{$(d,\epsilon)$-structure}
    if
    \[
        \dualof{\psi} = \epsilon \pconj{\psi}
        \quad
        \text{with}
        \quad
        \epsilon \in \{1,-1\}
        \,.
    \]
    Each \((d,\epsilon)\)-structure
    \((\EC,\psi)\)
    has an \emph{associated endomorphism}
    \[
        \mu := \pi_p\circ\psi \in \End(\EC)
        \,.
    \]
    We say that \((\EC,\psi)\) is ordinary resp. supersingular
    if \(\EC\) is ordinary resp. supersingular.\footnote{%
        We focus on curves defined over \(\FF_{p^2}\) 
        because our applications involve supersingular curves,
        and every supersingular curve is isomorphic to a curve over~\(\FF_{p^2}\).
        One might consider isogenies to conjugates over higher-degree
        extensions, but then in general we do not have 
        the relation \(\dualof{\psi} = \pm\pconj{\psi}\),
        which is fundamental to our results.
    }

\end{definition}

\begin{proposition}
    \label{prop:r}
    If \((\EC,\psi)\) is a \((d,\epsilon)\)-structure
    and \(\mu\) is its associated endomorphism,
    then
    \[
        \mu^2 = [\epsilon d]\pi_{\EC}
        \,.
    \]
    If \(\pi_\EC\) is the Frobenius endomorphism of~\(\EC\)
    and \(t_\EC\) is its trace,
    then there exists an integer \(r\) such that
    \(
        [r]\mu = [p] + \epsilon \pi_\EC
    \)
    in \(\End(\EC)\),
    \(
        dr^2 = 2p + \epsilon t_\EC
    \)
    in \(\ZZ\),
    and the characteristic polynomial of \(\mu\) is
    \(
        P_\mu(T) = T^2 - rdT + dp
    \).
\end{proposition}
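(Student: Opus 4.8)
The plan is to prove the quadratic relation \(\mu^2 = [\epsilon d]\pi_\EC\) first, and then to read off the integer \(r\) and the two remaining identities from the characteristic polynomial of \(\mu\). For the quadratic relation I would exploit the two ways of writing \(\mu\) that come from the conjugation identity \(\pi_p\circ\phi = \pconj{\phi}\circ\pi_p\): taking \(\phi = \psi\) gives \(\mu = \pi_p\circ\psi = \pconj{\psi}\circ\pi_p\). Composing one form with the other,
\[
    \mu^2 = (\pi_p\circ\psi)\circ(\pconj{\psi}\circ\pi_p) = \pi_p\circ(\psi\circ\pconj{\psi})\circ\pi_p \,.
\]
The defining condition \(\dualof{\psi} = \epsilon\pconj{\psi}\) gives \(\pconj{\psi} = \epsilon\dualof{\psi}\) (as \(\epsilon = \pm1\)), so \(\psi\circ\pconj{\psi} = \epsilon(\psi\circ\dualof{\psi}) = [\epsilon d]\) as an endomorphism of \(\pconj{\EC}\); since \([\epsilon d]\) commutes with \(\pi_p\) and \(\pi_p\circ\pi_p = \pi_\EC\), this collapses to \(\mu^2 = [\epsilon d]\pi_\EC\).

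Next I would invoke the standard fact that \(\mu\) satisfies its degree-\(2\) characteristic polynomial \(T^2 - [t_\mu]T + [\deg\mu]\), where \(t_\mu\in\ZZ\) is the trace of \(\mu\) and \(\deg\mu = \deg\pi_p\cdot\deg\psi = dp\). Cayley--Hamilton then reads \(\mu^2 - [t_\mu]\mu + [dp] = 0\); substituting \(\mu^2 = [\epsilon d]\pi_\EC\) and factoring out \([d]\) gives
\[
    [t_\mu]\mu = [dp] + [\epsilon d]\pi_\EC = [d]\bigl([p] + \epsilon\pi_\EC\bigr) \,.
\]
This already has the shape of the claimed identity \([r]\mu = [p]+\epsilon\pi_\EC\); what remains is to show that \(t_\mu\) is divisible by \(d\), so that one may write \(t_\mu = rd\) and cancel \([d]\) (which is legitimate since \(\End(\EC)\) is torsion-free over \(\ZZ\)).

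The main obstacle is exactly this divisibility, and here the squarefreeness of \(d\) is essential. To pin down \(t_\mu\) I would square the displayed identity. Using the Frobenius relation \(\pi_\EC^2 = [t_\EC]\pi_\EC - [p^2]\), the right-hand side simplifies to a single multiple of \(\pi_\EC\):
\[
    [t_\mu^2]\,\mu^2 = \bigl([dp] + [\epsilon d]\pi_\EC\bigr)^2 = \bigl[d^2(2\epsilon p + t_\EC)\bigr]\,\pi_\EC \,.
\]
Substituting \(\mu^2 = [\epsilon d]\pi_\EC\) on the left and cancelling the nonzero isogeny \(\pi_\EC\) (valid because \(\End(\EC)\) has no zero divisors) yields \(\epsilon d\,t_\mu^2 = d^2(2\epsilon p + t_\EC)\), that is, \(t_\mu^2 = d\,(2p + \epsilon t_\EC)\). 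Hence \(d \mid t_\mu^2\), and since \(d\) is squarefree this forces \(d \mid t_\mu\); writing \(t_\mu = rd\) and cancelling \([d]\) in the identity of the previous paragraph gives \([r]\mu = [p] + \epsilon\pi_\EC\). The same substitution turns \(t_\mu^2 = d(2p+\epsilon t_\EC)\) into \(dr^2 = 2p + \epsilon t_\EC\), and the characteristic polynomial becomes \(T^2 - [t_\mu]T + [dp] = T^2 - rd\,T + dp\), completing the proof. The one delicate point to track throughout is that squarefreeness is precisely what upgrades \(d\mid t_\mu^2\) to \(d\mid t_\mu\); without it \(r\) need not be an integer.
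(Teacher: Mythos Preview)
Your proof is correct and follows essentially the same approach as the paper: both establish \(\mu^2 = [\epsilon d]\pi_\EC\) via the conjugation identity \(\pi_p\circ\psi = \pconj{\psi}\circ\pi_p\) and the relation \(\pconj{\psi} = \epsilon\dualof{\psi}\), then use the characteristic polynomial of \(\mu\), square the resulting identity, and invoke squarefreeness of \(d\) to obtain \(d\mid t_\mu\). The only cosmetic difference is which factorisation of \(\mu^2\) you write down first, but the substance is the same.
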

\begin{proof}
    We have \(\psi\pi_p = \pi_p\pconj{\psi}\),
    so
    \(
        \mu^2 
        = 
        \pi_p\psi\pi_p\psi
        =
        \pi_p(\pi_p\pconj{\psi})\psi
        = 
        \pi_{\EC}(\pconj{\psi}\psi)
    \).
    Now \(\pconj{\psi} = \epsilon\dualof{\psi}\)
    (because \((\EC,\psi)\) is a \((d,\epsilon)\)-structure),
    so \(\pconj{\psi}\psi = [\epsilon d]\),
    and therefore
    \(\mu^2 = [\epsilon d]\pi_{\EC}\).
    For the rest: $\mu$ has degree $dp$,
    so it satisfies a quadratic polynomial
    $P_\mu(T) = T^2 - aT + dp$ for some integer $a$. 
    The first assertion then implies
    $[a]\mu = \mu^2 + [dp] = [\epsilon d] \pi_\EC + [dp]$.
    Squaring, we obtain
    	$$([a]\mu)^2 
    	 = [d]^2(\pi_\EC^2 +p^2) + 2[dp][\epsilon d] \pi_\EC
    	 = [d]^2(t_\EC\pi_\EC) + 2[dp][\epsilon d] \pi_\EC
    	 = [\epsilon d] \pi_\EC  ([\epsilon d]  t_\EC + 2dp)\,, $$
  	so $a^2 = \epsilon d  t_\EC + 2dp$, hence $d \mid a^2$.
    But $d$ is squarefree, so \(d \mid a\),
    and then $r = a/d$ satisfies the given conditions.
\end{proof}

\begin{remark}
    In the situation of Proposition~\ref{prop:r}:
    if \(\EC\) is ordinary,
    then $\ZZ[\mu]$ and $\ZZ[\pi_{\EC}]$ are
    orders in \(\QQ(\pi_{\EC})\)
    of discriminant $d^2r^2 - 4dp$ and 
  	$t_\EC^2 - 4p^2=r^2(d^2r^2 - 4dp)$, respectively,
    so
    \(|r|\) is the conductor of \(\ZZ[\pi_{\EC}]\) in~\(\ZZ[\mu]\).
    (The supersingular case is treated in detail
    in~\S\ref{sec:supersingular}.)
\end{remark}

\begin{definition}
    \label{def:isogeny-of-structures}
    \label{def:isomorphism-of-structures}
    Let \((\EC,\psi)\) and \((\EC',\psi')\) be \((d,\epsilon)\)-structures.
    We say an isogeny (resp.~isomorphism) \(\phi: \EC \to \EC'\)
    is an \emph{isogeny (resp.~isomorphism) of \((d,\epsilon)\)-structures}
    if 
    \(
        \psi'\phi = \pconj{\phi}\psi
    \),
    that is, if the following diagram commutes:
    \[
        \begin{tikzcd}[sep=huge]
            \EC
            \arrow[r, "\psi"]
            \arrow[d, "\phi"]
            & 
            \pconj{\EC}
            \arrow[d, "\pconj{\phi}"]
            \\
            \EC'
            \arrow[r, "\psi'"]
            &
            \pconj{(\EC')}
        \end{tikzcd}
    \]
\end{definition}

It is easily verified that isogenies of \((d,\epsilon)\)-structures
follow the usual rules obeyed by isogenies:
the composition of two isogenies of \((d,\epsilon)\)-structures
is an isogeny of \((d,\epsilon)\)-structures,
the dual of an isogeny of \((d,\epsilon)\)-structures
is an isogeny of \((d,\epsilon)\)-structures,
and every \((d,\epsilon)\)-structure
has an isogeny to itself (the identity map, for example).
Isogeny therefore forms an equivalence relation on
\((d,\epsilon)\)-structures.

If \((\EC, \psi)\) is a \((d,\epsilon)\)-structure
with associated endomorphism \(\mu\),
then 
\[
    \neg{(\EC,\psi)} := (\EC,-\psi)
    \qquad
    \text{and}
    \qquad
    \pconj{(\EC,\psi)} := (\pconj{\EC},\pconj{\psi})
\]
are \((d,\epsilon)\)-structures
with associated endomorphisms \(-\mu\) 
and \(\pconj{\mu}\),
respectively.
If \(\phi: (\EC,\psi) \to (\EC',\psi')\)
is an isogeny of \((d,\epsilon)\)-structures,
then \(\phi: \neg{(\EC,\psi)} \to \neg{(\EC',\psi')}\)
and \(\pconj{\phi}: \pconj{(\EC,\psi)} \to \pconj{(\EC',\psi')}\)
are also isogenies of \((d,\epsilon)\)-structures.
We thus have two involutions,
\textbf{negation} and \textbf{conjugation},
on the category of \((d,\epsilon)\)-structures and their isogenies.

\begin{remark}
    The isogenies \(\psi\)
    and \(\pi_p: \EC \to \pconj{\EC}\)
    are both in fact isogenies of \((d,\epsilon)\)-structures
    \((\EC,\psi) \to \pconj{(\EC,\psi)}\).
\end{remark}

\paragraph{Twisting.}
Let \(\alpha\) be an element of \(\FFbar_p\setminus\{0\}\).
For each elliptic curve \(\EC: y^2 = x^3 + ax + b\),
there is a curve
\[
    \EC^\alpha/\FF_{p^2}(\alpha^2): y^2 = x^3 + \alpha^4 ax + \alpha^6 b
\]
and an \(\FF_{p^2}(\alpha)\)-isomorphism
\(\tau_\alpha: \EC \to \EC^\alpha\)
defined by \((x,y) \mapsto (\alpha^2 x, \alpha^3y)\).
Abusing notation,
we write \(\tau_\alpha\) for this map
on \emph{every} elliptic curve;
with this convention,
\(\tau_\beta\circ\tau_\alpha = \tau_{\alpha\beta}\).
If \(\delta\) is a nonsquare in \(\FF_{p^2}\)
then \(\EC^{\sqrt{\delta}}\) is the \emph{quadratic twist}
(which, up to \(\FF_{p^2}\)-isomorphism,
is independent of the choice of nonsquare \(\delta\))
and \(\tau_{\sqrt{\delta}}\) is the twisting isomorphism.
For each isogeny \(\phi: \EC \to \EC'\)
defined over \(\FF_{p^2}\),
there is an \(\FF_{p^2}(\alpha^2)\)-isogeny
\[
    \phi^\alpha
    := 
    (\tau_\alpha\circ\phi\circ\tau_{1/\alpha})
    : 
    \EC^\alpha 
    \longrightarrow 
    (\EC')^\alpha
    \,.
\]

Now let \((\EC,\psi)\) be a \((d,\epsilon)\)-structure
with associated endomorphism \(\mu\).
If again we choose a nonsquare \(\delta\) in \(\FF_{p^2}\),
and a square root \(\sqrt{\delta}\) of \(\delta\) in~\(\FF_{p^4}\),
then in general \((\EC^{\sqrt{\delta}},\psi^{\sqrt{\delta}})\)
is \emph{not} a \((d,\pm1)\)-structure
(because conjugation and twisting generally do not commute);
but
\(
    (\EC,\psi)^{\sqrt{\delta}}
    :=
    (
        \EC^{\sqrt{\delta}},
        \tau_{(\sqrt{\delta})^{(p-1)}}\circ\psi^{\sqrt{\delta}}
    )
\)
is a \((d,-\epsilon)\)-structure
with associated endomorphism \(\mu^{\sqrt{\delta}}\).
The \(\FF_{p^2}\)-isomorphism class of \((\EC,\psi)^{\sqrt{\delta}}\)
is independent of the choice of \(\delta\);
we call \((\EC,\psi)^{\sqrt{\delta}}\)
the \emph{quadratic twist} of \((\EC,\psi)\).
Note that
\(((\EC,\psi)^{\sqrt{\delta}})^{\sqrt{\delta}} \cong (\EC,\psi)\).
If \(\phi: (\EC,\psi) \to (\EC',\psi')\)
is an isogeny of \((d,\epsilon)\)-structures,
then
\(\phi^{\sqrt{\delta}}\) induces an isogeny
of \((d,-\epsilon)\)-structures
\(
    \phi^{\sqrt{\delta}}: 
    (\EC,\psi)^{\sqrt{\delta}}
    \to
    (\EC',\psi')^{\sqrt{\delta}}
\).
Twisting therefore takes us
from the category of \((d,\epsilon)\)-structures 
into the category of \((d,-\epsilon)\)-structures
and back again.

\begin{example}
    Consider the case \(d = 1\).
    Each \((1,1)\)-structure is
    \(\FF_{p^2}\)-isomorphic to the base-extension to \(\FF_{p^2}\)
    of a curve defined over \(\FF_{p}\)
    (with the \(1\)-isogeny being \([\pm1]\));
    the associated endomorphism is the \(p\)-power Frobenius
    endomorphism on the base-extended curve,
    and the integer \(r\) of Proposition~\ref{prop:r}
    is the trace of the \(p\)-power Frobenius.
    Each \((1,-1)\)-structure is the quadratic twist of a
    \((1,1)\)-structure:
    essentially, an ordinary \((1,-1)\)-structure is isomorphic to a GLS
    curve~\cite{2011/Galbraith--Lin--Scott}.
    This discussion should be compared with the remark at the end
    of~\cite[\S3]{2016/Smith}.
\end{example}

\section{
    Parametrizations and modular curves
}
\label{sec:modular}

For our computations,
we can represent a \((d,\epsilon)\)-structure \((\EC,\psi)\)
as \((\EC,f_\psi,\alpha)\),
where \(f_\psi\) is the kernel polynomial of \(\psi\)
(that is, the monic polynomial whose roots are the \(x\)-coordinates of
the nonzero points in \(\ker\psi\))
and \(\alpha\) is the element such that 
\(\psi = \tau_\alpha\circ\tilde{\psi}\),
where \(\tilde{\psi}: \EC \to \EC/\ker\psi\)
is the normalized ``Vélu'' isogeny.

We want a more space-efficient encoding of
isomorphism classes of \((d,\epsilon)\)-structures,
both as a canonical encoding for vertices in isogeny graphs,
and for transmission of \((d,\epsilon)\)-structures
used as cryptographic values.

While \((d,\epsilon)\)-structures
may seem to be relatively complicated objects over \(\FF_{p^2}\),
their isomorphism classes can be encoded to little more than a single
element of \(\FF_p\).
Briefly: the key is to take the quotient by negation,
which maps the set \(S_{d,\epsilon}\) of isomorphism classes of \((d,\epsilon)\)-structures over \(\FF_{p^2}\)
into \(X_0(d)(\FF_{p^2})\),
where \(X_0(d)\) is the level-\(d\) modular curve.
Then, the Atkin--Lehner involution \(\omega_d\),
which maps a modular point onto its ``dual'',
acts as conjugation on the image of \(S_{d,\epsilon}\).
Writing \(X_0^+(d) = X_0(d)/\subgrp{\omega_d}\),
we have a four-to-one map from \(S_{d,\epsilon}\)
onto \(X_0^+(d)(\FF_p)\),
identifying the isomorphism class of \((\EC,\psi)\)
with \(-(\EC,\psi)\),
\(\pconj{(\EC,\psi)}\),
and \(-\pconj{(\EC,\psi)}\).
We can therefore represent an element of \(S_{d,\epsilon}\)
as a point in \(X_0^+(d)(\FF_p)\)
plus two bits (one to determine the sign, the other the conjugate).
Since \(X_0^+(d)\) is a curve,
we can further compress the representative point in \(X_0^+(d)(\FF_p)\)
to one element of \(\FF_p\)
plus a few bits.
This step depends strongly on the geometry of \(X_0^+(d)(\FF_p)\):
for example, if \(X_0^+(d)\) has genus \(0\)
then we can rationally parametrize it, giving a simple compression of points in
\(X_0^+(d)(\FF_p)\) to single elements of \(\FF_p\);
if \(X_0^+(d)\) is hyperelliptic,
then we can compress points in \(X_0^+(d)(\FF_p)\) 
to a single element of \(\FF_p\) plus a ``sign'' bit 
in the usual way;
and as the gonality of \(X_0^+(d)\) increases,
so does the number of auxiliary bits required.

A full development of these representations 
and the algorithms that operate on them is 
beyond the scope of this short article, 
but we will give useful explicit constructions 
for \(d = 2\) and \(3\) here,
derived from explicit parametrizations of \(\QQ\)-curves
due to Hasegawa~\cite{1997/Hasegawa}.
The associated endomorphisms 
for ordinary curves in these families
have been used to accelerate
scalar multiplication algorithms
(see~\cite{2016/Smith},
where we also find related families for \(d = 5\) and \(7\),
and~\cite{2013/Guillevic--Ionica})
and as inputs for specialized point-counting algorithms~\cite{2016/Morain--Scribot--Smith}.

\subsection{Representing \texorpdfstring{\((2,\epsilon)\)}{(2,e)}-structures}
\label{sec:Hasegawa-2}

Let \(\Delta\) be a nonsquare in \(\FF_p\),
and fix a square root \(\sqrt{\Delta}\) in \(\FF_{p^2}\).
For each \(u\) in \(\FF_p\), the curve
\[
    \EC_{2,u}/\FF_{p^2}: 
    y^2 
    = 
    x^3 - 6(5 - 3u\sqrt{\Delta})x + 8(7 - 9u\sqrt{\Delta})
\]
has a rational \(2\)-torsion point \((4,0)\),
which generates the kernel of a \(2\)-isogeny 
\(\psi_{2,u}: \EC_{2,u} \to \pconj{\EC_{2,u}}\)
defined over~\(\FF_{p^2}\).
If we use Vélu's formulae
to compute the (normalized) quotient isogeny
\(\EC_{2,u} \to \EC_{2,u}/\subgrp{(4,0)}\),
then
the isomorphism \(\EC_{2,u}/\subgrp{(4,0)} \to \pconj{\EC_{2,u}}\)
is \(\tau_{1/\sqrt{-2}}\).
Composing, we obtain an expression for \(\psi_{2,u}\) as a rational map:
\[
    \label{eq:phi2}
    \psi_{2,u}:
    (x,y)
    \longmapsto
    \left(
    \frac{-x}{2} - \frac{9(1 + u\sqrt{\Delta})}{x-4}
    , 
    \frac{y}{\sqrt{-2}}
    \left(
    \frac{-1}{2} + \frac{9(1 + u\sqrt{\Delta})}{(x-4)^2}
    \right)
    \right)
    \,.
\]
Computing the dual isogeny \(\dualof{\psi}_{2,u}\)
and comparing it with \(\pconj{\psi_{2,u}}\),
we find that
\((\EC_{2,u},\psi_{2,u})\) is 
a \((2,1)\)-structure if \( p \equiv 5, 7 \pmod{8} \),
or 
a \((2,-1)\)-structure if \( p \equiv 1, 3 \pmod{8} \).
(To obtain a family of \((2,-1)\)-structures
when \(p \equiv 5, 7 \pmod{8}\)
or \((2,1)\)-structures if \(p \equiv 1, 3 \pmod{8}\),
it suffices to take the quadratic twist.)

\subsection{Representing \texorpdfstring{\((3,\epsilon)\)}{(3,e)}-structures}
\label{sec:Hasegawa-3}

Let \(\Delta\) be a nonsquare in \(\FF_p\),
and fix a square root \(\sqrt{\Delta}\) in \(\FF_{p^2}\).
For each \(u\) in \(\FF_p\), the elliptic curve
\[
    \EC_{3,u}/\FF_{p^2}: 
    y^2 =
    x^3 - 3\big(5 + 4u\sqrt{\Delta}\big)x 
    + 2\big(2u^2\Delta + 14u\sqrt{\Delta} + 11\big)
\]
has an order-\(3\) subgroup
\(\{\OO, (3,\pm2(1-u\sqrt{\Delta}))\}\)
defined by the polynomial \(x - 3\).
Taking the quotient with Vélu's formulae
and composing with \(\tau_{1/\sqrt{-3}}\)
yields an explicit 3-isogeny 
\(\psi_{3,u}: \EC_{3,u} \to \pconj{\EC_{3,u}}\),
and we find that
\((\EC_{3,u},\psi_{3,u})\) is 
a \((3,1)\)-structure if \(p \equiv 2 \pmod{3}\),
or 
a \((3,-1)\)-structure if \(p \equiv 1 \pmod{3}\).
(To obtain a family of \((3,-1)\)-structures
when \(p \equiv 2 \pmod{3}\)
or \((3,1)\)-structures if \(p \equiv 1 \pmod{3}\),
take the quadratic twist.)

\section{
    Supersingular \texorpdfstring{\lowercase{\((d,\epsilon)\)}}{(d,e)}-structures
}
\label{sec:supersingular}

We now come to the main focus of our investigation:
supersingular \((d,\epsilon)\)-structures and their isogeny graphs.

\begin{definition}
    We write $\Dd$ for the set of
    supersingular \((d,\epsilon)\)-structures over \(\FF_{p^2}\) up to
    \(\FF_{p^2}\)-isomorphism,
    and \(\Graph{\Dd}\)
    for the graph on \(\Dd\)
    whose edges are (\(\FF_{p^2}\)-isomorphism classes of)
    isogenies of \((d,\epsilon)\)-structures.
    For each prime \(\ell \not= p\),
    we write \(\ellGraph{\Dd}\)
    for the subgraph of \(\Graph{\Dd}\) where the edges are \(\ell\)-isogenies.
\end{definition}

Observe that the quadratic twist gives an isomorphism of graphs \(\Graph{\Dd} \cong \Graph{\Dd[d][-\epsilon]}\).

\begin{proposition}
    \label{prop:supersingular}
    Let \((\EC,\psi)\) be a \((d,\epsilon)\)-structure
    with associated endomorphism \(\mu\).
    If \(\EC\) is supersingular, then
    \begin{enumerate}
        \item
            \(\mu^2 = [-dp]\).
        \item
            The trace of Frobenius satisfies
            \(t_\EC = -2\epsilon p\),
            and in particular
            \(\EC(\FF_{p^2}) \cong (\ZZ/(p + \epsilon)\ZZ)^2\).
    \end{enumerate}
\end{proposition}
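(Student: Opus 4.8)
The plan is to extract everything from the relation \(dr^2 = 2p + \epsilon t_\EC\) of Proposition~\ref{prop:r} by pinning down \(t_\EC\) exactly, and the first thing I would record is that both claimed identities follow at once from \(r = 0\). Indeed, the characteristic polynomial \(P_\mu(T) = T^2 - rdT + dp\) gives \(\mu^2 = [rd]\mu - [dp]\), so \(r = 0\) immediately yields \(\mu^2 = [-dp]\); and substituting \(r = 0\) into \(dr^2 = 2p + \epsilon t_\EC\) forces \(t_\EC = -2\epsilon p\) since \(\epsilon^2 = 1\). So the whole proposition reduces to proving \(r = 0\).

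To prove \(r = 0\) I would feed in the two standard facts about a supersingular curve \(\EC/\FF_{p^2}\): its Frobenius trace is divisible by \(p\), and it satisfies the Hasse bound \(|t_\EC| \le 2p\). Together these say \(t_\EC = kp\) for an integer \(k\) with \(|k| \le 2\), so that \(dr^2 = p(2 + \epsilon k)\). Hence \(p \mid dr^2\); as \(d\) is prime to \(p\), this forces \(p \mid r\), say \(r = ps\), and cancelling one factor of \(p\) leaves \(dps^2 = 2 + \epsilon k\). Now \(|2 + \epsilon k| \le 4 < p\) because \(p > 3\), whereas \(s \ne 0\) would make \(|dps^2| \ge p\) (recall \(d \ge 1\)); so \(s = 0\), giving \(r = 0\) together with \(2 + \epsilon k = 0\), i.e.\ \(k = -2\epsilon\) and \(t_\EC = -2\epsilon p\). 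This is the whole of part~(2)'s trace statement and, through the characteristic polynomial, part~(1).

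For the group structure I would combine \(\mu^2 = [-dp]\) with \(\mu^2 = [\epsilon d]\pi_\EC\) from Proposition~\ref{prop:r}: cancelling \([\epsilon d]\) (legitimate since \(\End(\EC)\) is torsion-free and \(\epsilon = \pm 1\)) gives \(\pi_\EC = [-\epsilon p]\). Then \(\EC(\FF_{p^2}) = \ker(\pi_\EC - 1) = \ker[{-}\epsilon p - 1] = \EC[p + \epsilon]\), using \(|{-}\epsilon p - 1| = p + \epsilon\); since \(\gcd(p + \epsilon, p) = 1\) this full torsion subgroup is \((\ZZ/(p+\epsilon)\ZZ)^2\), consistent with \(\#\EC(\FF_{p^2}) = p^2 + 1 - t_\EC = (p+\epsilon)^2\).

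There is no deep obstacle here once Proposition~\ref{prop:r} is available: the argument is a short divisibility-and-size estimate, with the hypothesis \(p > 3\) doing exactly the work needed to rule out \(s \ne 0\). The one place to be careful is the appeal to the two input facts on supersingular curves --- divisibility of the trace by \(p\) and the Hasse bound --- which are what confine \(t_\EC\) to \(\{0, \pm p, \pm 2p\}\) and thereby make the finite case analysis possible.
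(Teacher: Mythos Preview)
Your proof is correct and follows essentially the same line as the paper's: both derive \(p \mid r\) from \(p \mid t_\EC\) via the relation \(dr^2 = 2p + \epsilon t_\EC\), then force \(r = 0\) by a size bound (the paper uses the non-positivity of the discriminant of \(P_\mu\) to get \(|r| \le 2\sqrt{p/d}\), while you use the Hasse bound on \(t_\EC\), which is the same idea applied to \(\pi_\EC\) instead of \(\mu\)). Your added justification of the group structure via \(\pi_\EC = [-\epsilon p]\) is a nice detail that the paper leaves implicit.
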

\begin{proof}
    With the notation of Proposition~\ref{prop:r}:
    The curve $\EC$ is supersingular if and only if $p \mid t_\EC$. 
    Now $p \nmid d$, so $p \mid r$ by Proposition~\ref{prop:r}.
    The characteristic polynomial $P_{\mu}(T)$ of $\mu$
    has discriminant $(rd)^2-4dp$;
    this discriminant cannot be positive,
    so $|r| \le 2\sqrt{p/d}$.
    Since $p \mid r$, we have $r = 0$,
    so $\mu^2 = [-dp]$, and 
    $ t_\EC = \frac{-2p}{\epsilon} = - 2\epsilon p$.
\end{proof}

Proposition~\ref{prop:supersingular}
tells us that if \((\EC,\psi)\) is a supersingular \((d,\epsilon)\)-structure,
then \(\epsilon\) is completely determined by the \(\FF_{p^2}\)-isogeny
class of \(\EC\).
Further, \(t_\EC\) can only be \(\pm 2p\):
the special supersingular traces \(-p\), \(0\), and \(p\)
(corresponding to non-quadratic twists 
of curves of \(j\)-invariant \(0\) and \(1728\),
if these are supersingular)
cannot occur.

\subsection{Orientations}
\label{sec:orientations}

Proposition~\ref{prop:supersingular} tells us
that the associated endomorphism
of each supersingular \((d,\epsilon)\)-structure
acts like a square root of \(-dp\)
in the endomorphism ring.
We can make this notion more precise
using \emph{orientations},
as described by Colò and Kohel in~\cite{2020/Colo--Kohel}
and Onuki in~\cite{2021/Onuki}.
Before going further,
we recall some generalities.

Let $K$ be an imaginary quadratic field, $\OK$ its ring of integers,
and $\mathcal{O}$ an order in $K$.
A $K$-orientation on an elliptic curve $\EC/\FF_{p^2}$ is a homomorphism
$\iota : K \to \End^0(\EC)$;
we call the pair \((\EC,\iota)\) a \emph{$K$-oriented elliptic curve}.
We say $\iota$ is an $\OO$-orientation,
and \((\EC,\iota)\) is an $\OO$-oriented elliptic curve,
if $\iota(\OO) \subseteq \End(\EC)$.
An $\OO$-orientation \(\iota: K \to \End^0(\EC)\) is \emph{primitive} 
if $\iota(\OO) = \End(\EC) \cap \iota(K)$:
that is, if \(\iota\) is ``full'' in the sense that it does not extend
to an \(\OO'\)-orientation for any strict super-order \(\OO'\supset\OO\).

Let \((\EC,\iota)\) be a $K$-oriented elliptic curve.
If \(\phi: \EC \to \EC'\) is an isogeny,
then there is an \emph{induced $K$-orientation}
\(\phi_*(\iota)\) on \(\EC'\) defined by
\[
    \phi_*(\iota) : 
    \alpha
    \longmapsto
    \frac{1}{\deg(\phi)}\phi\circ\iota(\alpha)\circ\dualof{\phi}
    \,.
\]
Given two oriented curves $(\EC, \iota )$ and $(\EC', \iota' )$, 
an isogeny $\phi : \EC \to \EC'$ is said to be $K$-oriented,
or an isogeny of $K$-oriented elliptic curves,
if $\iota' = \phi_*(\iota)$.
In this case we write \(\phi: (\EC,\iota) \to (\EC',\iota')\).
If there exists a $K$-oriented isogeny
\(\tilde{\phi}: (\EC',\iota') \to (\EC,\iota)\)
such that \(\tilde{\phi}\circ\phi = [1]_{\EC}\)
and \(\phi\circ\tilde{\phi} = [1]_{\EC'}\),
then we say that \(\phi\) is a \(K\)-oriented isomorphism,
and we write \((\EC,\iota) \cong (\EC',\iota')\).
Note that 
\(\phi: (\EC,\iota) \to (\EC',\iota')\) is an oriented isomorphism
if and only if the underlying isomorphism of curves \(\phi\)
satisfies \(\phi\circ\iota(\alpha) = \iota'(\alpha)\circ\phi\)
for all \(\alpha\) in \(K\).

If \(\phi: (\EC,\iota) \to (\EC',\iota')\)
is a \(K\)-oriented isogeny,
then
\(\iota\) resp. \(\iota'\)
is a primitive \(\OO\) resp. \(\OO'\)-orientation
for some order \(\OO\) resp. \(\OO'\) in \(K\).
If \(\ell = \deg\phi\) is a prime not equal to \(p\),
then one of the following holds:
\begin{itemize}
    \item
        \(\OO = \OO'\),
        and \(\phi\) is said to be \emph{horizontal};
        or
    \item
        \(\OO \subset \OO'\) with \([\OO':\OO] = \ell\),
        and \(\phi\) is said to be \emph{ascending};
        or
    \item
        \(\OO \supset \OO'\) with \([\OO:\OO'] = \ell\),
        and \(\phi\) is said to be \emph{descending}.
\end{itemize}

Let \(\OO\) be an order in a quadratic field \(K\)
such that $p$ does not split in $K$ or divide the conductor of $\OO$. 
Following~\cite{2020/Colo--Kohel},
we let \(\SS{\OO}\) denote the set of $\OO$-oriented
supersingular elliptic curves over \(\FFbar_p\)
up to \(K\)-oriented isomorphism.
The subset of primitive $\OO$-oriented 
curves (up to \(K\)-oriented isomorphism)
is denoted by $\SSpr{\OO}$.

For any integral invertible ideal \(\fraka\) in \(\OO\)
and any \(\OO\)-oriented curve \((\EC,\iota)\),
we have a finite subgroup
\[
    \EC[\fraka] 
    := 
    \{ P \in \EC \mid \iota(\alpha)(P) = 0 \quad \forall \alpha \in \fraka \}
    \,.
\]
Now suppose \(\fraka\) is prime to the conductor of \(\OO\) in
\(\OK\).\footnote{
    Working with the class group,
    we can always replace ideals
    that are not prime to the conductor
    with equivalent integral ideals
    that are.
}
If \(\phi_\fraka: \EC \to \EC/\EC[\fraka]\) is the quotient isogeny,
then \((\phi_\fraka)_*(\iota)\) is an \(\OO\)-orientation on
\(\EC/\EC[\fraka]\),
and \(\phi_\fraka\) is a horizontal isogeny of \(\OO\)-oriented curves.
If \(\fraka\) is principal then
\(
    (\EC/\EC[\fraka],(\phi_\fraka)_*(\iota))
    \cong
    (\EC,\iota)
\),
so the map 
\[
    (\fraka,(\EC,\iota)) 
    \mapsto
    (\EC/\EC[\fraka],(\phi_\fraka)_*(\iota))
\]
extends to fractional ideals and factors through the class group,
and as in~\cite{2020/Colo--Kohel} 
we get a transitive group action
\begin{align*}
    \Cl(\OO) \times \SS{\OO} & \longrightarrow \SS{\OO}
    \,.
\end{align*}

Onuki~\cite{2021/Onuki} shows that if we restrict to a certain subset of the
\emph{primitive} \(\OO\)-oriented curves,
then this action is transitive and free.
Let \(\mathcal{J}_\OO\) denote the set of $j$-invariants of elliptic
curves $\EC$ over $\CC$ (not \(\FFbar_p\))
with $\End(\EC) \cong \OO$.
All elements in $\mathcal{J_O}$ are algebraic integers, 
so an elliptic curve whose $j$-invariant is in $\mathcal{J}_\OO$
has potential good reduction at any prime ideal. 
Since $\mathcal{J}_\OO$ is finite,
we can take a number field $L$
and a prime ideal $\frakp$ of $L$ above $p$ 
such that for all $j \in \mathcal{J_O}$,
there exists an elliptic curve over $L$ with good reduction at $\frakp$
and $j$-invariant $j$.
Fix an injection of the residue field of $L$ modulo $\frakp$ into $\FFbar_p$.
Let \(\Ell(\OO)\) be the set of isomorphism classes of elliptic curves $\EC$ over
$L$ with good reduction at \(p\) and \(j\)-invariants in $\mathcal{J_O}$.
For every such \(\EC\),
we let \([\cdot]_{\EC}\) be the \emph{normalized} \(\OO\)-orientation:
that is, such that for any invariant differential \(\omega\) on \(\EC\),
\(([\alpha]_{\EC})^*\omega = \alpha\omega\)
for all \(\alpha\) in \(\OO\).
Then reduction mod \(\frakp\) defines a map 
\(\rho : \Ell(\OO) \to  \SSpr{\OO} \)
sending \(\EC\) to \((\widetilde{\EC}, [.]_{\widetilde{\EC}})\),
where \(\widetilde{\EC}\) is the reduction of \(\EC/L\) at \(\frakp\)
and \([\cdot]_{\tilde\EC}\)
is the orientation 
such that \([\alpha]_{\widetilde{\EC}} = [\alpha]_\EC \pmod{\frakp}\)
for all \(\alpha\) in \(\OO\).

\begin{theorem}[\protect{Onuki~\cite[Theorem 3.4]{2021/Onuki}}]
	\label{theorem:Onuki}
    With the notation above:
    $\Cl(\OO)$ acts freely and transitively on $\rho(\Ell(\OO))$.
\end{theorem}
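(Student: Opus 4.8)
The plan is to move the classical torsor structure from characteristic zero across the reduction map $\rho$, using the elementary fact that a transitive action of a finite group $G$ on a finite set $X$ is free if and only if $|X| = |G|$. Concretely, I would reduce the theorem to three ingredients: (i) that $\Cl(\OO)$ acts simply transitively on the characteristic-zero objects lifting $\rho(\Ell(\OO))$; (ii) that $\rho$ is $\Cl(\OO)$-equivariant, so that $\rho(\Ell(\OO))$ is a single $\Cl(\OO)$-orbit; and (iii) that $\rho$ is injective on oriented-isomorphism classes, so that $|\rho(\Ell(\OO))| = |\Cl(\OO)|$. Granting these, the counting principle immediately upgrades transitivity to a free and transitive action.

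For (i) I would invoke the main theorem of complex multiplication: $\Cl(\OO)$ acts simply transitively on the set of $\CC$-isomorphism classes of elliptic curves with CM by $\OO$, a set in bijection with $\mathcal{J}_\OO$; in particular $|\mathcal{J}_\OO| = |\Cl(\OO)|$. After rigidifying each curve by its normalized orientation $[\cdot]_{\EC}$, an oriented isomorphism over $\QQbar$ is detected by the $j$-invariant alone, so $\Cl(\OO)$ also acts simply transitively on the oriented curves of $\Ell(\OO)$ up to $\QQbar$-oriented isomorphism. For (ii), equivariance is the statement that reduction commutes with the ideal action. Choosing the representative $\fraka$ prime to $p$ and to the conductor, the kernel $\EC[\fraka]$ is \'etale of order prime to $p$, so under good reduction at $\frakp$ it maps isomorphically onto $\widetilde{\EC}[\fraka]$; reduction then commutes with the quotient isogeny $\phi_\fraka$, and by functoriality the induced orientation $(\phi_\fraka)_*[\cdot]_{\EC}$ reduces to $(\phi_\fraka)_*[\cdot]_{\widetilde{\EC}}$. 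This gives $\rho(\fraka\cdot(\EC,[\cdot]_{\EC})) = \fraka\cdot\rho(\EC,[\cdot]_{\EC})$, whence $\rho(\Ell(\OO))$ is a single orbit by the simple transitivity in characteristic zero.

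The crux is (iii), which is where I expect the real work to lie. I must show that two curves in $\Ell(\OO)$ whose reductions are $K$-oriented isomorphic already have the same $j$-invariant, so that the induced map $\mathcal{J}_\OO \to \SSpr{\OO}$ is injective and $|\rho(\Ell(\OO))| = |\mathcal{J}_\OO| = |\Cl(\OO)|$. This is a Deuring-style lifting statement for oriented curves: a $K$-oriented isomorphism $(\widetilde{\EC},[\cdot]_{\widetilde{\EC}}) \to (\widetilde{\EC}',[\cdot]_{\widetilde{\EC}'})$ in characteristic $p$ must lift to a $K$-oriented isomorphism in characteristic zero. I would obtain this from the deformation theory of CM points together with the rigidity of good reduction, which makes the lift of an isomorphism of curves with good reduction unique; since both orientations are normalized, any such lift automatically respects them. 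It is exactly here that the hypotheses that $p$ does not split in $K$ and does not divide the conductor of $\OO$ are indispensable, as they guarantee that the reductions are supersingular and that the reduced orientations stay primitive---i.e.\ that $\rho$ really lands in $\SSpr{\OO}$ in the first place. Transitivity being essentially formal, this lifting is the genuine obstacle.
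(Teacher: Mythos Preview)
The paper does not prove this theorem at all: it is quoted verbatim as Onuki's Theorem~3.4, with the citation serving in lieu of a proof. There is therefore no argument in the paper to compare your proposal against.

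That said, your sketch is the right shape for how such results are proved, and is close to Onuki's own argument: transport the classical CM torsor structure across reduction via equivariance, and then argue injectivity of $\rho$ on oriented classes so that a counting argument upgrades transitivity to freeness. Two cautions about step~(iii). First, your appeal to ``rigidity of good reduction'' making lifts of isomorphisms unique is too quick in the supersingular case: the formal deformation space of a supersingular curve is one-dimensional, and an isomorphism of reductions need not lift to an isomorphism of the given models over~$L$. What one actually uses is the Deuring lifting theorem in its oriented form: an optimal embedding $\OO\hookrightarrow\End(\widetilde{\EC})$ determines a \emph{canonical} lift (the Serre--Tate canonical lift along that embedding), and this is what pins down the characteristic-zero curve uniquely from the oriented reduction. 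Second, the hypothesis that $p$ is non-split in $K$ and prime to the conductor is used not only to land in $\SSpr{\OO}$ but also in the lifting step itself, to ensure the embedding stays optimal under reduction. Your proposal identifies the correct crux but underestimates the work hiding behind it.
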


\subsection{The natural orientation}

From now on we let \(K = \QQ(\sqrt{-dp})\), and let \(\OK\) be the maximal order of~\(K\).

If \((\EC,\psi)\) is a supersingular \((d,\epsilon)\)-structure
and \(\mu\) is the associated endomorphism,
then 
\begin{align*}
    \iota_\psi: 
    \QQ(\sqrt{-dp})
    & 
    \longrightarrow
    \End^0(\EC)
    \\
    \sqrt{-dp}
    &
    \longmapsto
    \mu
\end{align*}
is a \(\ZZ[\sqrt{-dp}]\)-orientation
by Proposition~\ref{prop:supersingular}.
We call this the \emph{natural} orientation.

\begin{lemma}
    \label{lemma:orientation-to-structure}
    If \(\EC/\FF_{p^2}\) is a supersingular elliptic curve 
    with \(\#\EC(\FF_{p^2}) = (p + \epsilon)^2\)
    and \(\iota\) is a \(\ZZ[\sqrt{-dp}]\)-orientation on \(\EC\),
    then \(\iota\) is the natural orientation
    for some \((d,\epsilon)\)-structure \((\EC,\psi)\).
\end{lemma}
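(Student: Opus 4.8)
The plan is to reconstruct \(\psi\) from the orientation by factoring the endomorphism \(\mu := \iota(\sqrt{-dp})\) through a Frobenius \(p\)-isogeny. Since \(\iota\) is a \(\ZZ[\sqrt{-dp}]\)-orientation, \(\mu\) is a genuine element of \(\End(\EC)\) with \(\mu^2 = [-dp]\), so \(\deg\mu = dp\). Writing \(\mu^2 = [-d]\circ[p]\) and using supersingularity — so that \([p]\) is purely inseparable of degree \(p^2\) while \([d]\) is separable (as \(d\) is prime to \(p\)) — I would read off from the multiplicativity of (in)separable degrees that \(\mu\) has inseparable degree exactly \(p\) and separable degree \(d\).

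First I would produce the factorization. Because \(\mu\) has inseparable degree \(p\), the standard factorization through the \(p\)-power Frobenius gives \(\mu = \lambda\circ\pi_p\) with \(\pi_p : \EC \to \pconj{\EC}\) and \(\lambda : \pconj{\EC} \to \EC\) separable of degree \(d\). This puts Frobenius on the wrong side for the definition of an associated endomorphism, but the commutation relation \(\pi_p\circ\phi = \pconj{\phi}\circ\pi_p\) repairs this cheaply: setting \(\psi := \pconj{\lambda} : \EC \to \pconj{\EC}\), a \(d\)-isogeny, one gets \(\pi_p\circ\psi = \pi_p\circ\pconj{\lambda} = \lambda\circ\pi_p = \mu\), where now \(\pi_p : \pconj{\EC} \to \EC\). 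Thus \(\mu = \pi_p\circ\psi\), exactly the shape of an associated endomorphism.

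Next I would verify that \((\EC,\psi)\) is a \((d,\epsilon)\)-structure, i.e.\ that \(\dualof{\psi} = \epsilon\pconj{\psi}\). The hypothesis \(\#\EC(\FF_{p^2}) = (p+\epsilon)^2\) forces \(t_\EC = -2\epsilon p\), so the characteristic polynomial of \(\pi_\EC\) is \((T+\epsilon p)^2\), and since an order in a quaternion algebra has no nonzero nilpotents we get \(\pi_\EC = [-\epsilon p]\) in \(\End(\EC)\); the same holds on \(\pconj{\EC}\). Using the two forms \(\mu = \pi_p\circ\psi = \pconj{\psi}\circ\pi_p\) (again related by the commutation relation), I would compute \(\mu^2 = \pconj{\psi}\circ(\pi_p\circ\pi_p)\circ\psi = [-\epsilon p]\,(\pconj{\psi}\circ\psi)\), the middle \(\pi_p\circ\pi_p\) being the Frobenius \(\pi_{\pconj{\EC}} = [-\epsilon p]\). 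Comparing with \(\mu^2 = [-dp]\) gives \(\pconj{\psi}\circ\psi = [\epsilon d]\), whereas \(\dualof{\psi}\circ\psi = [d]\); cancelling \(\psi\) yields \(\pconj{\psi} = \epsilon\dualof{\psi}\), that is, \(\dualof{\psi} = \epsilon\pconj{\psi}\). Finally, \(\iota_\psi\) and \(\iota\) both send \(\sqrt{-dp}\mapsto\mu\), and an orientation \(\QQ(\sqrt{-dp}) \to \End^0(\EC)\) is determined by the image of \(\sqrt{-dp}\), so \(\iota = \iota_\psi\) and \(\iota\) is the natural orientation of the \((d,\epsilon)\)-structure \((\EC,\psi)\).

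I expect the delicate point to be the sign bookkeeping: getting the Frobenius onto the correct side of the factorization, and then confirming that the point-count hypothesis yields \(\epsilon\) rather than \(-\epsilon\) in \(\dualof{\psi} = \epsilon\pconj{\psi}\). The identity \(\pi_\EC = [-\epsilon p]\) — the observation that \(t_\EC = \pm 2p\) collapses \(\pi_\EC\) to the scalar \(\mp p\) — is precisely what ties the combinatorial hypothesis on \(\#\EC(\FF_{p^2})\) to the algebraic sign \(\epsilon\), and all the sign tracking should be organized around it.
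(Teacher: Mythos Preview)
Your proof is correct and follows essentially the same route as the paper: set \(\mu = \iota(\sqrt{-dp})\), factor it through \(\pi_p\) to extract a \(d\)-isogeny \(\psi\), then use \(\mu^2 = [-dp]\) together with \(\pi_\EC = [-\epsilon p]\) to deduce \(\pconj{\psi}\circ\psi = [\epsilon d]\) and hence \(\dualof{\psi} = \epsilon\pconj{\psi}\). Your two-step factorization (first \(\mu = \lambda\circ\pi_p\) via the standard separable/Frobenius decomposition, then conjugate to move \(\pi_p\) to the left) is a slightly more careful justification of what the paper asserts in one line, and your explicit remark that \(\iota\) and \(\iota_\psi\) agree on the generator \(\sqrt{-dp}\) closes a point the paper leaves implicit.
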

\begin{proof}
    Let \(\mu := \iota(\sqrt{-dp})\) in \(\End(\EC)\).
    We have \(\deg(\mu) = dp\) and \(p\nmid d\),
    so \(\mu\) factors over \(\FF_{p^2}\) 
    into the composition of a \(d\)-isogeny and a \(p\)-isogeny.
    Since \(\EC\) is supersingular,
    the \(p\)-isogeny is isomorphic to \(\pi_p\),
    and so
    \(\mu = \pi_p\psi\)
    for some \(d\)-isogeny \(\psi: \EC \to \pconj{\EC}\).
    It remains to show that \(\dualof{\psi} = \epsilon\pconj{\psi}\).
    Now \([-dp] = \mu^2 = \pi_p\psi\pi_p\psi = \pconj{\psi}\pi_p^2\psi =
    \pconj{\psi}\psi\pi_p^2\),
    and \(\pi_p^2 = [-\epsilon p]\)
    because \(\EC\) is supersingular 
    with \(\#\EC(\FF_{p^2}) = (p+\epsilon)^2\),
    so \([d] = \epsilon\pconj{\psi}\psi\),
    and therefore
    \(\dualof{\psi} = \epsilon\pconj{\psi}\).
\end{proof}

\begin{lemma}
    \label{lemma:natural-induced}
    Let \((\EC,\psi)\) and \((\EC',\psi')\) 
    be \((d,\epsilon)\)-structures
    with natural orientations \(\iota_\psi\)
    and \(\iota_{\psi'}\),
    respectively.
    If \(\phi: \EC \to \EC'\) is an isogeny,
    then
    \(\phi\) is an isogeny (resp.~isomorphism) of \(\ZZ[\sqrt{-dp}]\)-oriented
    elliptic curves \((\EC,\iota) \to (\EC',\iota')\)
    if and only if it is 
    an isogeny (resp.~isomorphism) of \((d,\epsilon)\)-structures \((\EC,\psi)\to(\EC',\psi')\).
\end{lemma}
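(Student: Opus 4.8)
The plan is to reduce both sides of the claimed equivalence to a single identity in the endomorphism algebra, and then to translate between that identity and the commuting square using the defining relation $\mu = \pi_p\circ\psi$ together with the Frobenius--conjugation identities from the notation section. Write $\mu = \iota_\psi(\sqrt{-dp})$ and $\mu' = \iota_{\psi'}(\sqrt{-dp})$ for the associated endomorphisms of $(\EC,\psi)$ and $(\EC',\psi')$. Since $K = \QQ(\sqrt{-dp})$ is generated over $\QQ$ by $\sqrt{-dp}$, an orientation is determined by the image of this generator; thus the oriented-isogeny condition $\iota_{\psi'} = \phi_*(\iota_\psi)$ collapses to the single equation $\mu' = \tfrac{1}{\deg\phi}\,\phi\mu\dualof{\phi}$ in $\End^0(\EC')$. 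Using $\dualof{\phi}\phi = [\deg\phi]$ and clearing the denominator, I would show this is equivalent to the cleaner identity $\phi\circ\mu = \mu'\circ\phi$. The same reduction applies verbatim in the isomorphism case, where the excerpt already records that an oriented isomorphism is exactly one with $\phi\circ\iota_\psi(\alpha) = \iota_{\psi'}(\alpha)\circ\phi$, i.e.\ $\phi\mu = \mu'\phi$.

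Next I would settle the forward direction (structure $\Rightarrow$ orientation). Assume the commuting square $\psi'\phi = \pconj{\phi}\psi$. Then $\mu'\phi = \pi_p\psi'\phi = \pi_p\pconj{\phi}\psi$. Applying the relation $\pi_p\circ\phi = \pconj{\phi}\circ\pi_p$ to the conjugate isogeny $\pconj{\phi}$ (and using $\pconj{(\pconj{\phi})} = \phi$) gives $\pi_p\circ\pconj{\phi} = \phi\circ\pi_p$, so $\mu'\phi = \phi\pi_p\psi = \phi\mu$. Hence $\phi$ is an isogeny of $\ZZ[\sqrt{-dp}]$-oriented curves.

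For the reverse direction I would assume $\phi\mu = \mu'\phi$, which rewrites as $\pi_p\psi'\phi = \phi\pi_p\psi = \pi_p\pconj{\phi}\psi$. The hard part will be cancelling the leading $\pi_p$: because $\pi_p$ is purely inseparable, it is not a monomorphism and cannot be cancelled directly. The plan is to compose on the left with the dual $\dualof{\pi_p}$; since $\dualof{\pi_p}\pi_p = [p]$, this yields $[p]\psi'\phi = [p]\pconj{\phi}\psi$ in $\operatorname{Hom}(\EC,\pconj{(\EC')})$, and torsion-freeness of this Hom group lets me divide by $p$ to recover $\psi'\phi = \pconj{\phi}\psi$. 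Because the paragraph-one reduction is identical for isogenies and isomorphisms, this argument simultaneously establishes both cases of the lemma, with the cancellation of the inseparable $\pi_p$ being the only genuinely delicate step.
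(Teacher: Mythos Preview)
Your argument is correct and follows essentially the same route as the paper's proof: reduce the oriented-isogeny condition to the single identity \(\phi\mu = \mu'\phi\) via the generator \(\sqrt{-dp}\), then use \(\phi\circ\pi_p = \pi_p\circ\pconj{\phi}\) to pass between this identity and the commuting square \(\psi'\phi = \pconj{\phi}\psi\). The paper presents the whole thing as a single chain of equivalences and simply writes ``cancelling \(\pi_p\)'' at the last step; your treatment via \(\dualof{\pi_p}\pi_p = [p]\) and torsion-freeness of \(\operatorname{Hom}(\EC,\pconj{(\EC')})\) makes that step explicit, which is a welcome clarification rather than a different idea.
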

\begin{proof}
    Let \(\mu\) resp.~\(\mu'\)
    be the associated endomorphisms 
    of \((\EC,\psi)\) resp.~\((\EC',\psi')\);
    then
    \begin{align*}
        \phi_*(\iota_{\psi}) = \iota_{\psi'}
        \iff
        \phi_*(\iota_{\psi})(\sqrt{-dp}) & = \iota_{\psi'}(\sqrt{-dp})
        & \text{(\(\sqrt{-dp}\) generates \(\QQ(\sqrt{-dp})\)}
        \\
        \iff
        \phi\circ\mu\circ\dualof{\phi} & = \mu'[\deg\phi]
        & \text{(multiplying by \(\deg\phi\))}
        \\
        \iff
        \phi\circ\mu & = \mu'\circ\phi
        & \text{(cancelling \(\dualof{\phi}\))}
        \\
        \iff
        \phi\circ\pi_p\circ\psi & = \pi_p\circ\psi'\circ\phi
        & \text{(by definition)}
        \\
        \iff
        \pi_p\circ\pconj{\phi}\circ\psi & = \pi_p\circ\psi'\circ\phi
        & \text{(\(\pi_p\circ\phi = \pconj{\phi}\circ\pi_p\))}
        \\
        \iff
        \pconj{\phi}\circ\psi & = \psi'\circ\phi
        & \text{(cancelling \(\pi_p\))}
    \end{align*}
    and the result follows on comparing definitions.
\end{proof}

Colò and Kohel~\cite{2020/Colo--Kohel} and Onuki~\cite{2021/Onuki}
use class-group actions to study the isogeny graphs \(\Graph{\SS{\OO}}\)
with vertex set \(\SS{\OO}\) for different orders \(\OO\).
Proposition~\ref{prop:graph-isomorphism}
allows us to transfer their results to our setting of \((d,\epsilon)\)-structures.

\begin{proposition}
    \label{prop:graph-isomorphism}
    The graphs \(\Graph{\Dd}\)
    and \(\Graph{\SS{\ZZ[\sqrt{-dp}]}}\)
    are explicitly isomorphic
    for \(\epsilon = 1\) and \(\epsilon = -1\).
\end{proposition}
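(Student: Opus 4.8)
The plan is to show that the natural-orientation construction is an explicit graph isomorphism. Concretely, I would set
\[
    F: \Dd \longrightarrow \SS{\ZZ[\sqrt{-dp}]}\,,
    \qquad
    (\EC,\psi) \longmapsto (\EC,\iota_\psi)\,,
\]
where $\iota_\psi$ is the natural orientation and we now regard $\EC$ as a curve over $\FFbar_p$. By Proposition~\ref{prop:supersingular} the map $\iota_\psi$ is a genuine $\ZZ[\sqrt{-dp}]$-orientation, so $F$ lands in the right set; and Lemma~\ref{lemma:natural-induced} is tailored precisely so that $F$ transports isogenies (resp.~isomorphisms) of $(d,\epsilon)$-structures to oriented isogenies (resp.~isomorphisms), and conversely. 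Thus the two preceding lemmas already supply the ``local'' dictionary between the two categories; what remains is to upgrade this dictionary to a bijection on vertices and a faithful correspondence on edges.

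The main obstacle is that the two graphs use \emph{different} equivalences: $\Dd$ classifies objects up to $\FF_{p^2}$-isomorphism, whereas $\SS{\ZZ[\sqrt{-dp}]}$ classifies up to $K$-oriented isomorphism over $\FFbar_p$. The linchpin reconciling them is that, by Proposition~\ref{prop:supersingular}, every supersingular $(d,\epsilon)$-structure has $\pi_\EC = [-\epsilon p]$: the $p^2$-power Frobenius is a scalar. (Here one uses that $\End^0(\EC)$ is a division algebra, so $(\pi_\EC + \epsilon p)^2 = 0$ forces $\pi_\EC = [-\epsilon p]$.) I would then show that any isogeny $\phi: \EC \to \EC'$ over $\FFbar_p$ between two such curves automatically descends to $\FF_{p^2}$: Frobenius-semilinearity gives $\pi_{\EC'}\circ\phi = \pconj{(\pconj{\phi})}\circ\pi_\EC$, and since $\pi_\EC = \pi_{\EC'} = [-\epsilon p]$ is central we may cancel it to get $\phi = \pconj{(\pconj{\phi})}$, i.e.\ $\phi$ is defined over $\FF_{p^2}$. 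Consequently there is no difference between $\FFbar_p$- and $\FF_{p^2}$-isogenies (or isomorphisms) among the curves in play, so the oriented-isomorphism and structure-isomorphism relations coincide through Lemma~\ref{lemma:natural-induced}; this single observation simultaneously yields well-definedness of $F$, its injectivity, and the edge correspondence.

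For surjectivity I would start from an arbitrary $(\EC_0,\iota)$ in $\SS{\ZZ[\sqrt{-dp}]}$ and manufacture a preimage. First choose an $\FF_{p^2}$-model $\EC$ of $\EC_0$ with $\pi_\EC = [-\epsilon p]$, i.e.\ with $\#\EC(\FF_{p^2}) = (p+\epsilon)^2$; such a model exists for every supersingular $j$-invariant, since the two models with scalar Frobenius $[\pm p]$ are quadratic twists of one another and so the sign can be prescribed. Because $\pi_\EC$ is then central, all geometric endomorphisms of $\EC$—in particular $\mu := \iota(\sqrt{-dp})$—are already defined over $\FF_{p^2}$, so $\iota$ is a $\ZZ[\sqrt{-dp}]$-orientation on $\EC/\FF_{p^2}$. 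Lemma~\ref{lemma:orientation-to-structure} now supplies a $d$-isogeny $\psi$ making $(\EC,\psi)$ a $(d,\epsilon)$-structure with $\iota_\psi = \iota$, whence $F(\EC,\psi) = (\EC_0,\iota)$.

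Assembling these pieces gives an explicit vertex bijection carrying edges to edges in both directions, hence a graph isomorphism; running the argument separately for $\epsilon = 1$ and $\epsilon = -1$ produces the two isomorphisms claimed (compatibly with the quadratic-twist identification $\Graph{\Dd} \cong \Graph{\Dd[d][-\epsilon]}$ noted above). I expect the only genuinely delicate point to be the descent-to-$\FF_{p^2}$ step; once the scalar-Frobenius observation is in hand, everything else is bookkeeping on top of the two preceding lemmas.
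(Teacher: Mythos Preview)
Your proposal is correct and follows essentially the same approach as the paper: both proofs reduce the statement to Lemmas~\ref{lemma:orientation-to-structure} and~\ref{lemma:natural-induced}, with the only remaining issue being the mismatch between \(\FF_{p^2}\)-isomorphism and \(\FFbar_p\)-oriented isomorphism, resolved by choosing an \(\FF_{p^2}\)-model with \((p+\epsilon)^2\) points and observing that all the relevant maps are then defined over~\(\FF_{p^2}\). Your treatment is somewhat more explicit than the paper's---you spell out the scalar-Frobenius descent argument for isogenies between distinct curves, whereas the paper simply invokes the standard fact that every endomorphism of a supersingular curve over \(\FF_{p^2}\) is already \(\FF_{p^2}\)-rational---but the substance is the same.
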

\begin{proof}
    This follows from Lemmas~\ref{lemma:orientation-to-structure}
    and~\ref{lemma:natural-induced},
    once we can show that the isomorphism class of any
    \(\ZZ[\sqrt{-dp}]\)-oriented supersingular curve \((\EC,\iota)\) over \(\FFbar_p\)
    contains a representative over \(\FF_{p^2}\) of order \((p+\epsilon)^2\).
    Since \(j(\EC)\) is in \(\FF_{p^2}\),
    after a suitable \(\FFbar_p\)-isomorphism
    we may suppose that \(\EC\) is defined over \(\FF_{p^2}\)
    and \(\#\EC(\FF_{p^2}) = (p + \epsilon)^2\);
    and then \(\iota\) is defined over \(\FF_{p^2}\)
    because for a supersingular elliptic curve
    over \(\FF_{p^2}\) all of the endomorphisms  are defined over \(\FF_{p^2}\).
\end{proof}

Let \(K = \QQ(\sqrt{-dp})\).
The order \(\ZZ[\sqrt{-dp}]\) has index \(2\) in \(\OK\)
if \(-dp \equiv 1\pmod{4}\),
and is equal to \(\OK\) otherwise.
If \(-dp \not\equiv 1 \pmod{4}\), then,
every natural orientation is a primitive \(\OK\)-orientation;
if \(-dp \equiv 1 \pmod{4}\),
each natural orientation is
either a primitive \(\ZZ[\sqrt{-dp}]\)-orientation or a primitive \(\OK\)-orientation.

\begin{proposition}
    \label{prop:primitivity}
    Let \((\EC,\psi)\) be a supersingular \((d,\epsilon)\)-structure
    with natural orientation \(\iota_\psi\).
    \begin{enumerate}
        \item
            If \(-dp \not\equiv 1 \pmod{4}\),
            then
            \(\iota_\psi\) is a primitive \(\OK\)-orientation.
        \item
            If \(-dp \equiv 1 \pmod{4}\),
            then
            \(\iota_\psi\) is a primitive \(\OK\)-orientation
            if the associated endomorphism \(\mu\) 
            fixes \(\EC[2]\) \emph{pointwise},
            and a primitive \(\ZZ[\sqrt{-dp}]\)-orientation otherwise.
    \end{enumerate}
\end{proposition}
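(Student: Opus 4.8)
The plan is to reduce both parts to a single divisibility criterion on the associated endomorphism. Write $\mu = \iota_\psi(\sqrt{-dp})$, so $\mu^2 = [-dp]$ by Proposition~\ref{prop:supersingular}. The organising observation is that the preimage $\iota_\psi^{-1}(\End(\EC))$ is an order in $K$ containing $\ZZ[\sqrt{-dp}]$: it is a subring of $K$ containing $\ZZ[\sqrt{-dp}]$ since $\iota_\psi$ is a $\ZZ[\sqrt{-dp}]$-orientation, and it is a finitely generated $\ZZ$-module because $\End(\EC)$ is one and $\iota_\psi$ is injective. With this framing, saying that $\iota_\psi$ is a primitive $\OO$-orientation is exactly saying that $\iota_\psi^{-1}(\End(\EC)) = \OO$. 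Part~(1) is then immediate: if $-dp\not\equiv 1\pmod 4$ then $\ZZ[\sqrt{-dp}] = \OK$, so the preimage order contains the maximal order $\OK$ and hence equals it, making $\iota_\psi$ a primitive $\OK$-orientation.

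For part~(2), when $-dp\equiv 1\pmod 4$ the order $\ZZ[\sqrt{-dp}]$ has index $2$ in $\OK$, so the only orders lying between $\ZZ[\sqrt{-dp}]$ and $\OK$ are these two themselves. Everything therefore turns on whether $\tfrac{1+\sqrt{-dp}}{2}$ lies in $\iota_\psi^{-1}(\End(\EC))$, that is, whether $\tfrac{1+\mu}{2} \in \End(\EC)$. First I would rewrite this as a divisibility statement: $\tfrac{1+\mu}{2} \in \End(\EC)$ if and only if $[1]+\mu = [2]\nu$ for some $\nu \in \End(\EC)$, and since $p > 3$ the map $[2]$ is separable with kernel $\EC[2]$, so this holds if and only if $\EC[2] \subseteq \ker([1]+\mu)$. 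For a two-torsion point $P$ one has $([1]+\mu)(P) = P + \mu(P)$ with $-P = P$, so $([1]+\mu)(P) = 0$ precisely when $\mu(P) = P$; hence $\EC[2] \subseteq \ker([1]+\mu)$ exactly when $\mu$ fixes $\EC[2]$ pointwise. Combining: if $\mu$ fixes $\EC[2]$ pointwise then $\OK \subseteq \iota_\psi^{-1}(\End(\EC))$, which forces equality by maximality and yields a primitive $\OK$-orientation; otherwise $\tfrac{1+\mu}{2}\notin\End(\EC)$ rules out $\OK$, leaving $\iota_\psi^{-1}(\End(\EC)) = \ZZ[\sqrt{-dp}]$ and a primitive $\ZZ[\sqrt{-dp}]$-orientation.

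The step I expect to be the crux is the divisibility reduction, namely translating ``$\tfrac{1+\mu}{2}$ is a genuine endomorphism'' into the pointwise condition on $\EC[2]$. This rests on the standard fact that, for $n$ prime to $p$, an endomorphism of $\EC$ is divisible by $[n]$ in $\End(\EC)$ if and only if it annihilates $\EC[n]$, which itself uses the separability of $[n]$ to factor through it. Once this criterion is available, the remainder is routine bookkeeping with the two orders and their conductor, and both parts follow uniformly from the identification of primitivity with the equality $\iota_\psi^{-1}(\End(\EC)) = \OO$.
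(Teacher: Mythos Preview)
Your proof is correct and follows essentially the same route as the paper: both reduce Case~(2) to checking whether the generator of \(\OK\) over \(\ZZ[\sqrt{-dp}]\) maps into \(\End(\EC)\), and both translate this into the condition that \(\mu\pm[1]\) factor through \([2]\), equivalently that \(\mu\) fix \(\EC[2]\) pointwise. The only cosmetic difference is that you work with \(\tfrac{1+\sqrt{-dp}}{2}\) and \([1]+\mu\) where the paper uses \(\tfrac{-1+\sqrt{-dp}}{2}\) and \(\mu-[1]\); these differ by integers, so the divisibility and torsion criteria are identical.
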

\begin{proof}
    By definition, \(\iota_\psi\) is a \(\ZZ[\sqrt{-dp}]\)-orientation.
    To complete Case (2),
    it suffices to check whether 
    the element
    \(\iota_\psi(\frac{1}{2}(-1 + \sqrt{-dp})) = \frac{1}{2}(\mu - [1])\)
    of \(\End^0(\EC)\cap\iota_\psi(K)\)
    is in \(\End(\EC)\)
    (because \(\frac{1}{2}(-1 + \sqrt{-dp})\) generates \(\OK\),
    but is not in \(\ZZ[\sqrt{-dp}]\)).
    This is the case if and only if \(\mu - [1]\)
    factors over \([2]\),
    if and only if \(\mu\) fixes \(\EC[2]\) pointwise.
\end{proof}

In the light of Propositions~\ref{prop:graph-isomorphism} and~\ref{prop:primitivity},
we partition \(\Dd\)
into two subsets:
\[
    \Dd = \Ddmax \sqcup \Ddsub \,,
\]
where \(\Ddmax\)
contains the classes whose natural orientations are primitive \(\OK\)-orientations,
and \(\Ddsub\)
contains the classes 
whose natural orientations are primitive orientations by the order of
conductor 2 in~\(\OK\).
If \(-dp \not\equiv 1 \pmod{4}\),
then \(\Ddmax = \Dd\) 
and \(\Ddsub = \emptyset\).
If \(-dp \equiv 1 \pmod{4}\),
then \([\OK:\ZZ[\sqrt{-dp}]] = 2\),
so
\(\Ddmax\) resp.~\(\Ddsub\) 
consists of the \((d,\epsilon)\)-structures
where \(\mu\) acts trivially resp. nontrivially on the \(2\)-torsion.

Given Lemma~\ref{lemma:natural-induced},
\(\ell\)-isogenies of \((d,\epsilon)\)-structures 
are ``ascending'', ``descending'', and ``horizontal''
with respect to the natural orientations:
we have horizontal \(\ell\)-isogenies
between vertices in \(\Ddmax\)
and between vertices in \(\Ddsub\),
while \(\Ddmax\) and \(\Ddsub\)
are connected by ascending and descending \(2\)-isogenies.
In the language of isogeny volcanoes,
vertices in \(\Ddmax\) form the ``craters'',
and vertices in \(\Ddsub\) the ``floors''.

\subsection{The class group action}

Proposition~\ref{prop:graph-isomorphism}
translates the action of \(\Cl(\ZZ[\sqrt{-dp}])\) on \(\SS{\ZZ[\sqrt{-dp}]}\) defined above
into an action on \(\Dd\).
Theorem~\ref{theorem:action} makes this precise:
it shows that \(\Ddmax\) is a principal homogeneous space (or torsor)
under \(\Cl(\OK)\),
and that if \(\Ddsub\) is not empty
then it is a principal homogeneous space under \(\Cl(\ZZ[\sqrt{-dp}])\).

\begin{theorem}
	\label{theorem:action}
    Let \(K = \QQ(\sqrt{-dp})\),
    with maximal order \OK,
    and let \(\epsilon = \pm1\).
    \begin{itemize}
        \item
            The class group \(\Cl(\OK)\)
            acts freely and transitively on \(\Ddmax\).
        \item
            If \(\Ddsub \not= 0\),
            then \(\Cl(\ZZ[\sqrt{-dp}])\) 
            acts freely and transitively on \(\Ddsub\).
    \end{itemize}
\end{theorem}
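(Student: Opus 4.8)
The plan is to transport the statement into the language of oriented curves, apply Onuki's Theorem~\ref{theorem:Onuki}, and then close the gap between Onuki's reduction orbit $\rho(\Ell(\OO))$ and the full set of primitive oriented curves. First I would use Proposition~\ref{prop:graph-isomorphism} to identify $\Dd$ with $\SS{\ZZ[\sqrt{-dp}]}$ (for either sign of $\epsilon$), and Proposition~\ref{prop:primitivity} to refine this to $\Ddmax \cong \SSpr{\OK}$ and $\Ddsub \cong \SSpr{\ZZ[\sqrt{-dp}]}$, the latter being nonempty only when $-dp \equiv 1 \pmod 4$, where $\ZZ[\sqrt{-dp}]$ is the order of conductor~$2$ in $\OK$. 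Under these identifications the horizontal class-group action built after~\cite{2020/Colo--Kohel} restricts to an action of $\Cl(\OK)$ on $\SSpr{\OK}$ and of $\Cl(\ZZ[\sqrt{-dp}])$ on $\SSpr{\ZZ[\sqrt{-dp}]}$, so the theorem becomes the assertion that, for each order $\OO \in \{\OK,\ZZ[\sqrt{-dp}]\}$, the group $\Cl(\OO)$ acts freely and transitively on \emph{all} of $\SSpr{\OO}$.

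Next I would invoke Onuki. For each such $\OO$ we have $p \nmid [\OK:\OO]$ (the conductor is $1$ or $2$ and $p$ is odd) and $p$ does not split in $K = \QQ(\sqrt{-dp})$ --- in fact $v_p(dp) = 1$, so $p$ \emph{ramifies} --- hence Theorem~\ref{theorem:Onuki} gives a free, transitive action of $\Cl(\OO)$ on the subset $\rho(\Ell(\OO)) \subseteq \SSpr{\OO}$ coming from reduction of CM curves. Freeness on the whole of $\SSpr{\OO}$ is then automatic, and transitivity on all of $\SSpr{\OO}$ will follow as soon as we prove the single surjectivity statement $\rho(\Ell(\OO)) = \SSpr{\OO}$.

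The crux is this surjectivity, equivalently that $\Cl(\OO)$ has one orbit on $\SSpr{\OO}$ rather than two. Here I would exploit the conjugation involution $(\EC,\iota) \mapsto (\pconj{\EC},\pconj{\iota})$, which is exactly the conjugation of $(d,\epsilon)$-structures and which is what interchanges the (at most two) orbits. The key observation is that this involution is itself realized by a class-group element: the $p$-power Frobenius $\pi_p : \EC \to \pconj{\EC}$ is, by the Remark identifying $\pi_p$ as an isogeny of $(d,\epsilon)$-structures together with Lemma~\ref{lemma:natural-induced}, an oriented isogeny $(\EC,\iota_\psi) \to (\pconj{\EC},\pconj{\iota_\psi})$, and because $p$ ramifies its kernel is $\EC[\frakp]$ for the ramified prime $\frakp \mid p$ of $\OO$. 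Thus conjugation coincides with the action of $[\frakp] \in \Cl(\OO)$; being a class-group element it preserves every orbit, so $\rho(\Ell(\OO))$ is stable under conjugation. Combining this with Onuki's structural description of $\SSpr{\OO}$ as $\rho(\Ell(\OO))$ together with its conjugate forces $\SSpr{\OO} = \rho(\Ell(\OO))$. (Equivalently, one can match cardinalities: for $p$ ramified the number of elements of $\SSpr{\OO}$ equals $h(\OO) = |\rho(\Ell(\OO))|$, so the inclusion is an equality.)

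I expect the main obstacle to be making this final step fully rigorous. The class-group action of~\cite{2020/Colo--Kohel} is set up using ideals \emph{prime to} $p$, whereas $\frakp$ lies at the characteristic and $\pi_p$ is inseparable; identifying conjugation with the action of $[\frakp]$ therefore requires either replacing $\frakp$ by an equivalent prime-to-$p$ ideal and verifying that the resulting horizontal isogeny lands on the same oriented curve as $\pi_p$, or else carrying out the optimal-embedding count directly. Either route is precisely where the ramification of $p$ in $\QQ(\sqrt{-dp})$ is genuinely used, and where I would concentrate the care of the argument.
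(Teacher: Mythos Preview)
Your overall architecture matches the paper's exactly: transport to oriented curves via Propositions~\ref{prop:graph-isomorphism} and~\ref{prop:primitivity}, invoke Onuki's Theorem~\ref{theorem:Onuki}, and close the gap by showing that conjugation is realised by an ideal class so that $\rho(\Ell(\OO)) = \SSpr{\OO}$.

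The one substantive difference is in \emph{which} ideal class you use to realise conjugation. You reach for the ramified prime $\frakp$ above $p$ and the inseparable $\pi_p$, and then correctly flag that the Colò--Kohel action is only set up for ideals prime to $p$, leaving you with the task of replacing $\frakp$ by an equivalent prime-to-$p$ ideal. The paper simply makes that replacement from the outset: it uses $\frakd = (d,\sqrt{-dp})$, whose action is the \emph{separable} $d$-isogeny $\psi$ itself. One checks directly that $\EC[\frakd] = \EC[d]\cap\ker\mu = \ker\psi$, so $[\frakd]$ sends $(\EC,\psi)$ to $\pconj{(\EC,\psi)}$, and since $\frakd\frakp = (\sqrt{-dp})$ is principal this is the same class you were aiming for. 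In other words, the ``equivalent prime-to-$p$ ideal'' you anticipate needing is precisely $\frakd$, and the verification that it lands on the conjugate is immediate from the definition of a $(d,\epsilon)$-structure. Choosing $\psi$ rather than $\pi_p$ thus dissolves your self-identified obstacle entirely; no optimal-embedding count or cardinality matching is needed.
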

\begin{proof}
    Let \(\OO = \OK\) or \(\ZZ[\sqrt{-dp}]\).
    Since \(p\) does not split in \(K\),
    Theorem \ref{theorem:Onuki} tells us that 
    \(\Cl(\OO)\) acts freely and transitively
    on \(\rho(\Ell(\OO)) \subseteq \SSpr{\OO}\).
    Given the isomorphism of Proposition~\ref{prop:graph-isomorphism},
    it only remains to prove that \(\rho(\Ell(\OO)) = \SSpr{\OO}\).
    For any \((\EC,\iota)\) in \(\SSpr{\OO}\),  
    Proposition~3.3 of~\cite{2021/Onuki}
    tells us that
    \((\EC,\iota)\) or \(\pconj{(\EC,\iota)}\) is in \(\rho(\Ell(\OO))\). 
    In our case,
    \emph{both} are in \(\rho(\Ell(\OO))\),
    so the action on \(\SSpr{\OO}\) is free:
    since \(\EC[\frakd] = \EC[d]\cap\ker\mu = \ker\psi\),
    the action of
    \(\frakd = (d, \sqrt{-dp})\)
    on \(\SSpr{\OO}\)
    maps \((\EC,\iota)\) to \(\pconj{(\EC,\iota)}\),
    because it maps \((\EC,\psi)\) to \(\pconj{(\EC,\psi)}\).
\end{proof}

\begin{corollary}
    Let \(K = \QQ(\sqrt{-dp})\),
    with maximal order \OK.
    If \(h_K = \#\Cl(\OK)\),
    then
    \[
        \#\Ddmax = h_K
        \qquad
        \text{and}
        \qquad
        \#\Ddsub 
        = 
        \begin{cases}
            h_K & \text{if } {-dp} \equiv 1 \pmod{8}
            \,,
            \\
            3h_K & \text{if } {-dp} \equiv 5 \pmod{8}
            \,,
            \\
            0 & \text{otherwise}
            \,.
        \end{cases}
    \]
\end{corollary}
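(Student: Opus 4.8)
The plan is to read both cardinalities off the torsor structure of Theorem~\ref{theorem:action} and then compute the two relevant class numbers arithmetically. Since \(\Cl(\OK)\) acts freely and transitively on \(\Ddmax\), the set \(\Ddmax\) is a \(\Cl(\OK)\)-torsor, so \(\#\Ddmax = \#\Cl(\OK) = h_K\) with no further work. For \(\#\Ddsub\) I would split on \(-dp \bmod 4\). When \(-dp \not\equiv 1 \pmod 4\)---that is, \(-dp \equiv 2,3,6,7 \pmod 8\), the ``otherwise'' case---we have \(\ZZ[\sqrt{-dp}] = \OK\), so every natural orientation is already a primitive \(\OK\)-orientation and \(\Ddsub = \emptyset\) as observed in the discussion preceding the statement; hence \(\#\Ddsub = 0\).

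It remains to treat \(-dp \equiv 1 \pmod 4\) (equivalently \(-dp \equiv 1\) or \(5 \pmod 8\)), where \(\ZZ[\sqrt{-dp}]\) is the order of conductor \(2\) in \(\OK\) and Theorem~\ref{theorem:action} gives \(\#\Ddsub = \#\Cl(\ZZ[\sqrt{-dp}])\) as soon as \(\Ddsub \neq \emptyset\). Here I would invoke the classical class-number formula for a non-maximal order of conductor \(f\),
\[
    \#\Cl(\ZZ[\sqrt{-dp}])
    =
    \frac{f\,h_K}{[\OK^\times:\ZZ[\sqrt{-dp}]^\times]}
    \prod_{\ell \mid f}\left(1 - \left(\frac{d_K}{\ell}\right)\frac{1}{\ell}\right),
\]
specialized to \(f = 2\) and \(d_K = -dp\) (the discriminant of \(K\), since \(-dp \equiv 1 \pmod 4\)). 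Because \(p > 3\) forces \(dp > 3\), the field \(K\) is neither \(\QQ(\sqrt{-1})\) nor \(\QQ(\sqrt{-3})\), so \(\OK^\times = \{\pm1\} = \ZZ[\sqrt{-dp}]^\times\) and the unit index is \(1\). The formula then reads \(2h_K\bigl(1 - \tfrac12\left(\frac{-dp}{2}\right)\bigr)\), and the Kronecker symbol at \(2\) is \(\left(\frac{-dp}{2}\right) = +1\) when \(-dp \equiv 1 \pmod 8\) and \(-1\) when \(-dp \equiv 5 \pmod 8\), yielding \(h_K\) and \(3h_K\) respectively.

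The one genuine gap to close is the non-emptiness of \(\Ddsub\) in this last case, since without it Theorem~\ref{theorem:action} supplies no count. I would settle this from the isogeny-volcano picture already in hand: \(p\) ramifies in \(K = \QQ(\sqrt{-dp})\) and so is non-split, and \(p\) is odd, hence prime to the conductor \(2\); the Colò--Kohel/Onuki theory then guarantees that primitively \(\ZZ[\sqrt{-dp}]\)-oriented supersingular curves exist. The hardest part is making sure this existence is unconditional and is the \emph{only} subtlety; everything else is a direct class-number evaluation.

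Finally, I would use an edge count as an independent check that also re-proves non-emptiness quantitatively. Each crater vertex in \(\Ddmax\) emits \(2 - \left(\frac{-dp}{2}\right) > 0\) descending \(2\)-isogenies to the floor, while each floor vertex has a \emph{unique} ascending \(2\)-isogeny; since descending edges are duals of ascending ones, counting this bipartite edge set from each side gives \(\#\Ddsub = h_K\bigl(2 - \left(\frac{-dp}{2}\right)\bigr)\), which recovers \(h_K\) and \(3h_K\) exactly as above and confirms \(\Ddsub \neq \emptyset\).
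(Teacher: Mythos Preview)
Your proposal is correct and follows essentially the same route as the paper: invoke Theorem~\ref{theorem:action} to identify \(\#\Ddmax\) and \(\#\Ddsub\) with the two class numbers, then evaluate \(\#\Cl(\ZZ[\sqrt{-dp}])\) via the conductor-\(2\) class-number formula (the paper cites Cox, Theorem~7.24, which gives the same expression \(h_K\bigl(2 - \left(\tfrac{-dp}{2}\right)\bigr)\) once the unit index is seen to be \(1\)).

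Where you go beyond the paper is in explicitly confronting the non-emptiness of \(\Ddsub\) when \(-dp\equiv 1\pmod 4\). The paper's proof simply asserts ``\(\#\Ddsub = \#\Cl(\ZZ[\sqrt{-dp}])\)'' in that case, tacitly assuming the hypothesis ``\(\Ddsub\neq\emptyset\)'' of Theorem~\ref{theorem:action} is met. Your first justification (Onuki's theory guarantees \(\SSpr{\ZZ[\sqrt{-dp}]}\neq\emptyset\) since \(p\) ramifies in \(K\) and is prime to the conductor \(2\)) is the clean way to close this, and is implicit in the paper's setup of~\S\ref{sec:orientations}. Your second justification, the descending/ascending \(2\)-edge count, is a nice independent check; just be aware that the paper presents Table~\ref{tab:isogeny-counts} as a \emph{consequence} of Theorem~\ref{theorem:action}, so if you want the edge count to stand on its own you should appeal directly to the general horizontal/ascending/descending trichotomy for \(\ell\)-isogenies of oriented curves rather than to the table.
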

\begin{proof}
    By Theorem~\ref{theorem:action},
    we have \(\#\Ddmax = \#\Cl(\OK)\) 
    and either \(\#\Ddsub = 0\)
    (if \(-dp \not\equiv 1 \pmod{4}\))
    or \(\#\Ddsub = \#\Cl(\ZZ[\sqrt{-dp}])\)
    (if \(-dp \equiv 1 \pmod{4}\)).
    It remains
    to compute 
    \(\#\Cl(\ZZ[\sqrt{-dp}])\)
    in the case \(-dp \equiv 1 \pmod{4}\),
    where \(\ZZ[\sqrt{-dp}]\)
    has conductor 2.
    In this case, the formula of~\cite[Theorem 7.24]{2013/Cox}
    simplifies to
    \[
        \#\Cl(\ZZ[\sqrt{-dp}]) 
        = 
        \frac{\#\Cl(\OK)}{[\OK^\times:\ZZ[\sqrt{-dp}]^\times]}
        \left(
            2 - \left(\frac{-dp}{2}\right)
        \right)
        \,,
    \]
    where the Kronecker symbol \(({-dp}/{2})\)
    is \(0\) if \(2 \mid -dp\),
    \(1\) if \(-dp \equiv \pm1 \pmod{8}\),
    and \(-1\) if \(-dp \equiv \pm3 \pmod{8}\).
    The result follows on noting that \([\OK^\times:\ZZ[\sqrt{-dp}]^\times] = 1\),
    because \(-dp\) is never \(-3\) or \(-4\).
\end{proof}

\begin{remark}
	\label{rem::Class group order}
    The Brauer--Siegel theorem states that asymptotically, \(\log_2(h_K) \sim \frac{1}{2}\log_2|\Delta_K|\),
    where 
    \(\Delta_K = -dp\) if \(-dp \equiv 1 \pmod{4}\),
    and \(-4dp\) otherwise.
    (See e.g.~\cite[Ch.~XVI]{1994/Lang} for details.)
\end{remark}

\subsection{Computing the class group action}
\label{sub:computation_action}

Suppose we want to compute the action of 
(the class of) an ideal \(\frakell = (\ell,a + b\sqrt{-dp})\)
on some \((\EC,\psi)\) in \(\Dd\). 
Following~\cite{2018/De-Feo--Kieffer--Smith},
we consider two approaches: ``Vélu'' and ``modular''.

In the \textbf{``Vélu''} approach,
we compute a generator \(K_\ell\) of the kernel \(\EC[\frakell]\)
of \(\phi\):
that is, \(K_\ell\) is a point in \(\EC[\ell]\) 
such that $[a]\mu(K_\ell) = -[b]K_\ell$.
This point may only be defined over an extension $\FF_{p^{2r}}$ of $\FF_{p^2}$.
We then compute the quotient isogeny $\phi: \EC \to \EC' := \EC/\subgrp{K_\ell}$ 
using Vélu's formul\ae{},
at a cost of \(O(\ell)\) \(\FF_{p^{2r}}\)-operations,
or the algorithm of~\cite{2020/Bernstein--De-Feo--Leroux--Smith},
in \(\widetilde{O}(\sqrt{\ell})\) \(\FF_{p^{2r}}\)-operations.
Finally, we push $\psi$ through $\phi$
by computing the image of its kernel subgroup
and choosing the correct ``sign''.
If we are given an \(\FF_{p^2}\)-rational generator $G$ for \(\ker\psi\),
then pushing \(\psi\) through \(\phi\)
essentially costs one isogeny evaluation;
otherwise, this amounts to an exercise in symmetric functions,
with a cost on the order of \(O(d)\) isogeny evaluations.
Each evaluation costs~\(O(\ell)\) or~\(\widetilde{O}(\sqrt{\ell})\)
\(\FF_{p^2}\)-operations.
The total cost is dominated by the cost of the multiplication by the cofactor
$\#E(\FF_{p^{2r}}) / \ell$ needed to find $K_\ell$:
we have $\log{(\#E(\FF_{p^{2r}}) / \ell)} = 2r\log{p}$,
so constructing \(K_\ell\) requires $O(r^2 \log{p})$ operations in $\FF_{p^2}$.

To compute the action of $\frakell$ on \((\EC,\psi)\),
we compute \(G = \gcd(\Phi_d(X, X^p), \Phi_\ell(j(\EC), X))\)
(if \(d = 1\), then we take \(\Phi_1(X,X^p) = X^p - X\)).
In general \(G\) has only two roots in \(\FF_{p^2}\),
corresponding to the two \(\ell\)-neighbours.
In a non-backtracking walk we can divide by \(X - j(\EC')\),
where \((\EC',\psi')\) is the preceding vertex,
to find the next step.
Otherwise, we can distinguish between the two neighbours
by examining the action of \(\mu\) on the \(\ell\)-torsion.
Care must be taken to identify, and to appropriately handle,
the exceptional case where a neighbouring
\(j\)-invariant admits multiple \((d,\epsilon)\)-structures
modulo negation (as with the vertices \(A\) and \(C\) in the example of
Figure~\ref{fig:graph-97} below).

To compute \(\gcd(\Phi_d(X, X^p),\Phi_\ell(j(E), X))\),
compute \(F(X) := \Phi_\ell(j(\EC), X)\)
in \(O(\ell)\) \(\FF_{p^2}\)-operations,
and then
$Y := X^p \bmod F(X)$ 
using the square-and-multiply algorithm
in $O(\ell \log{p})$ \(\FF_{p^2}\)-operations.
We then compute
$Z := \Phi_d(X, Y) \mod F$,
and then \(\gcd(Z,F)\),
in \(O(d^2\ell^2)\) \(\FF_{p^2}\)-operations.
Generally \(\ell\) is polynomial in \(\log p\),
but typically it is even smaller,
and then the dominating step is the computation of \(Y\).

As in the ordinary case~\cite{2018/De-Feo--Kieffer--Smith},
the Vélu approach is more efficient when \(r^2 < \ell\);
in particular, when \(K_\ell\) is defined over \(\FF_{p^2}\).
If we are free to choose \(p\), then
we can optimize systems that use the action of a series of small primes \(\ell_i\)
by taking \(p\) such that the \(\ell_i\) split in \(\ZZ[\sqrt{-dp}]\)
\emph{and} $\ell_i \mid p+\epsilon$,
that is,
$p = c \cdot \prod_{i = 1}^n \ell_i - \epsilon$ with
$c$ a cofactor making $p$ prime.
In the case \(d = 1\),
this is exactly the optimization that is key to making CSIDH practical.

\begin{remark}
    It would be interesting to look for
    an expression for the group action
    operating directly on the parameters in 
    the Hasegawa families
    of~\S\ref{sec:Hasegawa-2}
    and~\S\ref{sec:Hasegawa-3}.
\end{remark}

\section{
    The supersingular isogeny graph
}
\label{sec:ssgraph}

We can now describe the structure of the isogeny graph \(\Graph{\Dd}\).
Factoring isogenies,
it suffices to describe \(\ellGraph{\Dd}\)
for prime \(\ell\).
The class group actions
of Theorem~\ref{theorem:action}
imply the isogeny counts
in Table~\ref{tab:isogeny-counts}.

\begin{table}[ht]
    \caption{The number of horizontal, ascending, and descending
        \(\ell\)-isogenies from each vertex in the \(\ell\)-isogeny graph.}
    \label{tab:isogeny-counts}
    \centering
    \begin{tabular}{r|c|c|c|c|c}
        Prime \(\ell\)
            & Conditions on \((d,p)\)
            & Vertex (sub)set 
            & Horizontal & Ascending & Descending 
        \\
        \hline
        \hline
        \multirow{6}{*}{\(\ell = 2\)} 
            & \multirow{2}{*}{\(-dp \equiv 1 \pmod{8}\)}
            & \(\Ddmax\)
            & \(2\) & \(0\) & \(1\)
        \\
            &
            & \(\Ddsub\)
            & \(0\) & \(1\) & \(0\)
        \\
        \cline{3-6}
            & \multirow{2}{*}{\(-dp \equiv 3 \pmod{8}\)}
            & \(\Ddmax\)
            & \(0\) & \(0\) & \(3\)
        \\
            & 
            & \(\Ddsub\)
            & \(0\) & \(1\) & \(0\)
        \\
        \cline{3-6}
            & \(-dp \not\equiv 1,3 \pmod{8} \)
            & \(\Dd\)
            & \(1\) & \(0\) & \(0\)
        \\
        \hline
        \(\ell > 2\) 
            & --- 
            & \(\Dd\) 
            & \(1 + (-dp/\ell)\) & 0 & 0
        \\
        \hline
    \end{tabular}
\end{table}

\paragraph{Examples.}
Figures~\ref{fig:graph-101},
\ref{fig:graph-97},
and~\ref{fig:graph-83},
display \(\ell\)-isogeny graphs
on 
\(\Dd[3][1][101]\),
\(\Dd[3][-1][97]\),
and \(\Dd[3][1][83]\)
for various \(\ell\) generating the class groups.
These figures also form examples of
the various \(2\)-isogeny structures 
listed in Table~\ref{tab:isogeny-counts}.

\begin{figure}[h]
	\centering
    \begin{tikzpicture}[thick]
        \node (A) at (0,1) {$A$} ;
        \node (mA) at (0,0) {$\neg{A}$} ;

        \node (B) at (2,0) {$B$} ;
        \node (mB) at (2,1) {$\neg{B}$} ;
        \node (Bp) at (-2,1) {$\pconj{B}$} ;
        \node (mBp) at (-2,0) {$\neg{\pconj{B}}$} ;
		
        \node (C) at (-4,2) {$C$} ;
        \node (mC) at (-4,-1) {$\neg{C}$} ;
        \node (Cp) at (4,-1) {$\pconj{C}$} ;
        \node (mCp) at (4,2) {$\neg{\pconj{C}}$} ;

        \node (D) at (4,1) {$D$} ;
        \node (mD) at (4,0) {$\neg{D}$} ;
        \node (Dp) at (-4,0) {$\pconj{D}$} ;
        \node (mDp) at (-4,1) {$\neg{\pconj{D}}$} ;

        \node (E) at (2,2) {$E$} ;
        \node (mE) at (2,-1) {$\neg{E}$} ;
        \node (Ep) at (-2,-1) {$\pconj{E}$} ;
        \node (mEp) at (-2,2) {$\neg{\pconj{E}}$} ;

        \node (F) at (0,2) {$F$} ;
        \node (mF) at (0,-1) {$\neg{F}$} ;

        \draw (A) -- (mB) ;
        \draw (mB) -- (D) ;
        \draw (D.east) edge[bend left=90] (mD.east) ;
        \draw (mD) -- (B) ;
        \draw (B) -- (mA) ;
        \draw (mA) -- (mBp) ;
        \draw (mBp) -- (Dp) ;
        \draw (Dp.west) edge[bend left=90] (mDp.west) ;
        \draw (mDp) -- (Bp) ;
        \draw (Bp) -- (A) ;

        \draw (A) -- (F) ;
        \draw (mB) -- (E) ;
        \draw (D) -- (mCp) ;
        \draw (mD) -- (Cp) ;
        \draw (B) -- (mE) ;
        \draw (mA) -- (mF) ;
        \draw (mBp) -- (Ep) ;
        \draw (Dp) -- (mC) ;
        \draw (mDp) -- (C) ;
        \draw (Bp) -- (mEp) ;
	\end{tikzpicture}
	\caption{\ellGraph[2]{\Dd[3][1][101]} for $\ell = 2$. 
        The class group of \(\QQ(\sqrt{-303})\) is isomorphic to \(\ZZ/10\ZZ\),
        and generated by an ideal over \(2\)
        (we see this in the length-10 cycle).
        The correspondence between vertex labels and parameters
        for the degree-3 Hasegawa family of~\S\ref{sec:Hasegawa-3}
        (with \(\Delta = 2\))
        is
        \(A \leftrightarrow 0\),
        \(B \leftrightarrow 6\),
        \(C \leftrightarrow 24\),
        \(D \leftrightarrow 25\),
        and
        \(E \leftrightarrow 42\);
        the special vertex \(F\), which has no Hasegawa parameter,
        is \((\EC,\psi)\) with \(\EC: y^2 = x^3 + 1\)
        and \(\psi: (x,y) \mapsto ((67x^3 + 66)/x^2, (89x^3 + 96)\sqrt{2}y/x^3)\).
        Note that \(\pconj{A} = \neg{A}\) and \(\pconj{F} = \neg{F}\).
        The underlying curves of \(B\) and \(C\) are isomorphic.
    }
    \label{fig:graph-101}
\end{figure}

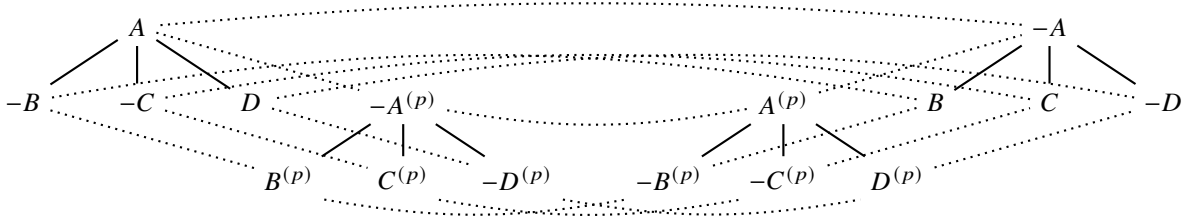
\begin{figure}
    \centering
    \begin{tikzpicture}[node distance={40mm}, thick] 
            \node (A)  at (-6,0) {$A$};  
            \node (mB)  at (-7.5, -1) {$\neg{B}$};
            \node (mC)  at (-6, -1) {$\neg{C}$}; 
            \node (D)  at (-4.5, -1) {$D$};

            \draw (A) -- (mB) ;
            \draw (A) -- (mC) ;
            \draw (A) -- (D) ;
            
            \node (mAp)  at (-2.5,-1) {$\neg{\pconj{A}}$};  
            \node (Bp)   at (-4,-2) {$\pconj{B}$};
            \node (Cp)   at (-2.5,-2) {$\pconj{C}$}; 
            \node (mDp)  at (-1,-2) {$\neg{\pconj{D}}$};
            
            \draw (mAp) -- (Bp) ;
            \draw (mAp) -- (Cp) ;
            \draw (mAp) -- (mDp) ;

            \node (Ap)  at (2.5,-1) {$\pconj{A}$};
            \node (mBp) at (1,-2) {$\neg{\pconj{B}}$};
            \node (mCp) at (2.5,-2) {$\neg{\pconj{C}}$}; 
            \node (Dp)  at (4,-2) {$\pconj{D}$};
            
            \draw (Ap) -- (mBp) ;
            \draw (Ap) -- (mCp) ;
            \draw (Ap) -- (Dp) ;

            \node (mA)  at (6,0) {$\neg{A}$};  
            \node (B)   at (4.5,-1) {$B$};
            \node (C)   at (6,-1) {$C$}; 
            \node (mD)  at (7.5,-1) {$\neg{D}$}; 

            \draw (mA) -- (B) ;
            \draw (mA) -- (C) ;
            \draw (mA) -- (mD) ;

            \draw[dotted] (A) -- (mAp) ;
            \draw[dotted] (mAp) edge[bend right=10] (Ap) ;
            \draw[dotted] (Ap) -- (mA) ;
            \draw[dotted] (mA) edge[bend right=5] (A) ;

            \draw[dotted] (B) edge[bend right=10] (mB) ;
            \draw[dotted] (mB) -- (Bp) ;
            \draw[dotted] (Bp.south east) edge[bend right=10] (mBp.south west) ;
            \draw[dotted] (mBp) -- (B) ;

            \draw[dotted] (C) edge[bend right=10] (mC) ;
            \draw[dotted] (mC) -- (Cp) ;
            \draw[dotted] (Cp.south east) edge[bend right=10] (mCp.south west) ;
            \draw[dotted] (mCp) -- (C) ;

            \draw[dotted] (mD) edge[bend right=10] (D) ;
            \draw[dotted] (D) -- (mDp) ;
            \draw[dotted] (mDp.south east) edge[bend right=10] (Dp.south west) ;
            \draw[dotted] (Dp) -- (mD) ;
    \end{tikzpicture}
    \caption{The isogeny graphs
        \(\ellGraph[2]{\Dd[3][-1][97]}\) (solid) 
        and
        \(\ellGraph[5]{\Dd[3][-1][97]}\) (dotted).
        We have \(\Cl(\QQ(\sqrt{-3\cdot97})) \cong \ZZ/4\ZZ\),
        generated by an ideal over~\(5\).
        The \(2\)-isogenies are ascending/descending up/down the page;
        the \(5\)-isogenies are horizontal.
        The correspondence between vertex labels and parameters for the degree-3
        Hasegawa family of~\S\ref{sec:Hasegawa-3} (with \(\Delta = 5\))
        is 
        \(A \leftrightarrow 47\), 
        \(B \leftrightarrow 1\), 
        \(C \leftrightarrow 14\), 
        and 
        \(D \leftrightarrow 22\).
        The underlying curves of \(A\) and \(C\) are isomorphic.
    }
    \label{fig:graph-97}
\end{figure}

\begin{figure}[ht]
    \centering
    \begin{tikzpicture}[node distance={40mm}, thick] 
        \node (1)  at (-3.5,0) {$\pconj{C}$};
        \node (2)  at (-2, 0.5) {$A$};
        \node (3)  at (1, 0.5) {$\neg{C}$};
        \node (4)  at (3.5, 0) {$\neg{\pconj{B}}$};
        \node (5)  at (2, -0.5) {$D$};
        \node (6)  at (-1,-0.5) {$B$};

        \node (11)  at (-3.5,-1.5) {$C$};
        \node (12)  at (-2, -1) {$\neg{A}$};
        \node (13)  at (1, -1) {$\neg{\pconj{C}}$};
        \node (14)  at (3.5, -1.5) {$\neg{B}$};
        \node (15)  at (2, -2) {$\pconj{D}$};
        \node (16)  at (-1,-2) {$\pconj{B}$};

        \draw[dotted] (1) -- (2);
        \draw[dotted] (2) -- (3);
        \draw[dotted] (3) -- (4);
        \draw[dotted] (4) -- (5);
        \draw[dotted] (5) -- (6);
        \draw[dotted] (6) -- (1);
            
        \draw[dotted] (11) -- (12);
        \draw[dotted] (12) -- (13);
        \draw[dotted] (13) -- (14);
        \draw[dotted] (14) -- (15);
        \draw[dotted] (15) -- (16);
        \draw[dotted] (16) -- (11);        
            
        \draw (1) -- (4);
        \draw (2) -- (5);
        \draw (3) -- (6);
        \draw (11) -- (14);
        \draw (12) -- (15);
        \draw (13) -- (16);
            
        \draw[dashed] (1) -- (11);
        \draw[dashed] (2) -- (12);
        \draw[dashed] (3) -- (13);
        \draw[dashed] (4) -- (14);
        \draw[dashed] (5) -- (15);
        \draw[dashed] (6) -- (16);
    \end{tikzpicture}
    \caption{\ellGraph{\Dd[3][1][83]} for $\ell = 2$ (solid), $\ell = 3$ (dashed) and $\ell = 5$ (dotted).
    All isogenies are horizontal.
    We have \(\Cl(\QQ(\sqrt{-3\cdot83})) \cong \ZZ/2\ZZ\times\ZZ/6\ZZ\),
    with the \(\ZZ/2\ZZ\)-factor generated by the ideal above 3,
    and the \(\ZZ/6\ZZ\)-factor generated by an ideal above 5
    (we see this in the length-6 cycles);
    the ideal above 2 is the cube of an ideal above 5.
    The correspondence between vertex labels and parameters for the
    degree-3 Hasegawa family of~\S\ref{sec:Hasegawa-3}
    (with \(\Delta = 2\)) is
    \(A \leftrightarrow 0\),
    \(B \leftrightarrow 32\),
    \(C \leftrightarrow 40\);
    the special vertex \(D\),
    which has no Hasegawa parameter,
    is \((\EC:y^2 = x^3 + 1,\psi)\)
    where \(\psi\) maps \((x,y)\)
    to \((((72\sqrt{2} + 14)x^3 + (39\sqrt{2} + 56))/x^2, \sqrt{2}(35x^3 + 52)y/x^3)\).
    Note that \(\neg{A} = \pconj{A}\).
    }
    \label{fig:graph-83}
\end{figure}

\paragraph{Involutions.}
There are two obvious involutions on \(\Graph{\Dd}\),
negation and conjugation.
These are generally not the only involutions.
Every prime \(\ell\) dividing the discriminant ramifies
in \(\OK\) (and \(\ZZ[\sqrt{-dp}]\));
the prime \(\frakell\) over \(\ell\)
gives an element of order \(2\)
in \(\Cl(\OK)\) (and \(\Cl(\ZZ[\sqrt{-dp}])\)),
and thus an involution on \(\Graph{\Dd}\).
Let
\(\frakd_1,\ldots,\frakd_n\) be the primes above
the prime factors of \(d\),
and \(\frakp\) the prime above \(p\);
note that
\([\frakd_1]\cdots[\frakd_n] = [\frakp]\),
because \(\frakd_1\cdots\frakd_n\frakp = (\mu)\).
If \(-dp \equiv 1\) or \(2 \pmod{4}\)
then
\(
    \Cl(\OK)[2] = \subgrp{[\frakd_1],\ldots,[\frakd_n],[\frakp]}
\),
so \(\Cl(\OK)[2] \cong (\ZZ/2\ZZ)^n\).
If \(-dp \equiv 3 \pmod{4}\),
then
\(
    \Cl(\OK)[2] = \subgrp{[\fraka],[\frakd_1],\ldots,[\frakd_n],[\frakp]}
\)
where \(\fraka\) is the ideal above \(2\),
and \(\Cl(\OK)[2] \cong (\ZZ/2\ZZ)^{n+1}\).
In each case,
the action of the ideal class
\(\prod_i[\frakd_i] = [\frakp]\)
on any \((d,\epsilon)\)-structure \((\EC,\psi)\)
is realised by the isogeny \(\psi: (\EC,\psi) \to
(\pconj{\EC},\pconj{\psi})\),
and is therefore equal to the conjugation involution.

Since the group actions are free,
each of the involutions that come from nontrivial 2-torsion elements in
the class groups---including conjugation---has no fixed points.
Negation, on the other hand,
can have fixed points:
for example, if \(p \equiv 3 \pmod{4}\) 
and \(\EC\) is the curve with \(j\)-invariant 1728,
and \(i\) is an automorphism of degree 4,
then \((\EC,i)\) is a \((1,1)\)-structure,
and \((\EC,i) \cong (\EC,-i)\).
This is the only fixed point among \((1,1)\)-structures,
and its existence is implied by the fact that
the class number of \(\Cl(\sqrt{-p})\)
is odd when \(p \equiv 3 \pmod{4}\).

\begin{remark}
    If \(-dp \equiv 5 \pmod{8}\),
    then there is an order-3 automorphism $T$ of \(\Ddsub\)
    cycling the triplets of vertices with ascending \(2\)-isogenies
    to the same vertex in \(\Ddmax\).
    We will see that $T$
    is induced by the action of an ideal class in~\(\Cl(\ZZ[\sqrt{-dp}])\).
    The ideal \(\frakt = (4,\sqrt{-dp}-1)\ZZ[\sqrt{-dp}]\)
    has order \(3\) in \(\Cl(\ZZ[\sqrt{-dp}])\),
    but capitulates to become the principal ideal \((2)\) in \(\OK\)
    (because \(\sqrt{-dp}-1 = 2\omega\),
    where \(\omega\) is the unit \(\frac{1}{2}(\sqrt{-dp}-1)\));
    indeed, \(\frakt\) generates the kernel
    of the canonical homomorphism \(\Cl(\ZZ[\sqrt{-dp}]) \to \Cl(\OK)\).
    Since \(\frakt\) meets the conductor,
    its action on \(\Ddsub\) is not well-defined,
    but we can consider the action of an equivalent ideal in the class
    group.
    Let \(\prod_i\ell_i^{e_i}\) be the prime factorization of \((dp +
    1)/4\) (and note that each \(\ell_i\) is odd);
    then
    \(
        (\sqrt{-dp}-1)
        =
        \frakt
        \cdot
        \prod_i
        \frakell_i^{e_i}
        \)
        where 
        \(
        \frakell_i := (\ell_i,\sqrt{-dp}-1)
    \);
    the product \(\prod_i\frakell_i^{e_i}\)
    is equivalent to \(\frakt\) in \(\Cl(\ZZ[\sqrt{-dp}])\),
    prime to the conductor,
    and its action on $\Ddsub$ induces the automorphism $T$.
    In the case where \(d = 1\) (CSIDH),
    this is explained at length in~\cite{2020/Onuki--Takagi}.
\end{remark}

\paragraph{Crossroads.}
The map \((\EC,\psi) \mapsto \EC\)
defines a covering from \(\Graph{\Dd}\) onto a subgraph of 
the isogeny graph of all supersingular curves over \(\FF_{p^2}\).
For \(d_1\not=d_2\)
the images of \(\Graph{\Dd[d_1]}\) and \(\Graph{\Dd[d_2]}\) 
can intersect,
forming ``crossroads'' where we can switch from walking in
\(\Graph{\Dd[d_1]}\) into \(\Graph{\Dd[d_2]}\),
and vice versa.

\begin{definition}
    Let \(d_1 \not= d_2\) be squarefree integers
    such that \(d_1d_2\) is squarefree.
    We say that a supersingular curve \(\EC/\FF_{p^2}\)
    with \(\#\EC(\FF_{p^2}) = (p + \epsilon)^2\)
    is a \emph{\((d_1,d_2)\)-crossroad}
    if there exist isogenies \(\psi_1: \EC \to \EC_1\)
    and \(\psi_2: \EC \to \EC_2\)
    such that \((\EC,\psi_1)\) is a \((d_1,\epsilon)\)-structure
    and \((\EC,\psi_2)\) is a \((d_2,\epsilon)\)-structure.
\end{definition}

If \((\EC,\psi)\) is a \((d_1,\epsilon)\)-structure,
then we can easily check whether \(\EC\) is a \((d_1,d_2)\)-crossroad
by evaluating the classical modular polynomial \(\Phi_{d_2}\) at
\((j(\EC,),j(\EC)^p)\).
However, \((d_1,d_2)\)-crossroads are generally very rare.
Indeed, if \(\EC\) is a \((d_1,d_2)\)-crossroad,
then it has an endomorphism of degree \(d_1d_2\) with cyclic kernel
(in particular,
\((d_1,d_2)\)-crossroads with \(d_1d_2 < \frac{\sqrt{p}}{2}\)
appear in the isogeny ``valleys'' described in \cite{2019/Love--Boneh}).
We can therefore enumerate the entire set of \((d_1,d_2)\)-crossroads
over a given \(\FF_{p^2}\)
by computing the set of roots \(j\) of \(\Phi_{d_1d_2}(x,x)\)
in \(\FF_{p^2}\),
and then checking for which \(j\) we have \(\Phi_{d_1}(j,j^p) = 0\).
The polynomial \(\Phi_{d_1d_2}(x,x)\) has
degree \(\prod_\ell (\ell+1)\)
where \(\ell\) ranges over the prime factors of \(d_1d_2\),
so there are only \(O(d_1d_2)\) \((d_1,d_2)\)-crossroads
(up to isomorphism)
among the \(\OO (\sqrt{dp})\) vertices in \(\Graph{\Dd[d_1]}\).

But while crossroads are rare,
computing the few examples is relatively easy,
and computing \((d_1,d_2)\)-crossroads
gives us a useful way of quickly constructing some vertices
in \(\Graph{\Dd[d_1]}\) (and in \(\Graph{\Dd[d_2]}\)).
Suppose we want to construct a vertex in \(\Graph{\Dd[d_1]}\).
Since the vertices in \(\Graph{\Dd[d_1]}\) correspond to curves with an endomorphism subring
isomorphic to \(\ZZ[\sqrt{-d_1p}]\), we might try to construct a vertex from 
a root in \(\FF_{p^2}\) of the Hilbert class polynomial for \(\QQ(\sqrt{-d_1p})\);
but the degree of this polynomial, which is the order of the class
group, is exponential with respect to \(\log p\),
so this approach is infeasible for large \(p\).
Instead, we choose a small squarefree \(d_2\)
such that \(p\) does not split 
in the maximal order of \(\QQ(\sqrt{-d_1d_2})\).
If there exists a \((d_1,d_2)\)-crossroad \(\EC/\FF_{p^2}\),
then its \(j\)-invariant is
a root in \(\FF_{p^2}\) 
of a quadratic factor of the Hilbert class polynomial for
\(\QQ(\sqrt{-d_1d_2})\),
because the composition of the \(d_1\)-isogeny \(\EC \to \pconj{\EC}\)
with the conjugate \(d_2\)-isogeny \(\pconj{\EC} \to \EC\)
is a cyclic endomorphism of degree \(d_1d_2\).
All other vertices in \(\Graph{\Dd[d_1]}\) can then be reached through the class group action.

\section{
    Cryptographic applications
}
\label{sec:crypto}

The action of 
\(\Cl(\OK)\) on $\Ddmax$ and \(\Cl(\ZZ[\sqrt{-dp}])\) on $\Ddsub$ 
makes \(\Graph{\Dd}\) a natural candidate setting for 
group-action/HHS-based
postquantum cryptosystems
following Stolbunov~\cite{2006/Rostovtsev--Stolbunov,2009/Stolbunov,2010/Stolbunov} 
and
Couveignes~\cite{2006/Couveignes}.
For example,
for each \(d > 1\),
we can define a key exchange algorithm on \(\Dd\)
generalizing
CSIDH~\cite{2018/Castryck--Lange--Martindale--Panny--Renes},
which uses the action of \(\Cl(\ZZ[\sqrt{-p}])\) on \(\Ddsub[1][1]\)
and CSURF~\cite{2020/Castryck--Decru},
which uses the action of \(\Cl(\QQ(\sqrt{-p}))\) on \(\Ddmax[1][1]\).
Despite the prominence of orientations, 
the relationship between key exchange in \(\Dd\)
and the OSIDH protocol~\cite{2020/Colo--Kohel}
is distant.
The \(\OO\)-orientations in OSIDH
involve orders \(\OO\)
with massive conductors in \(\OK\)
where \(\OK\) has tiny class number;
here, \(\OO\)
has tiny conductor and \(\OK\)
has massive class number.

\subsection{Hard problems}
\label{sec:hard-problems}

The conjectural hard problems for the action of \(\Cl(\OK)\) on \(\Dd\)
are vectorization (the analogue of the DLP) and parallelization (the
analogue of the CDHP) from Couveigne's \emph{Hard Homogenous Spaces}
framework~\cite{2006/Couveignes}.

\begin{definition}[Vectorization]
	\label{vectorization}
	Given $(\EC, \psi)$ and $(\EC', \psi')$ in $\Dd$,
    find $\fraka \in \Cl(\OK)$ such that $\fraka \cdot (\EC, \psi) = (\EC', \psi')$.
\end{definition}

\begin{definition}[Parallelization]
	\label{parallelization}
	Given $(\EC_0, \psi_0)$, $(\EC_1, \psi_1)$, and $(\EC_2, \psi_2)$
    in $\Dd$,
    compute the unique $(\EC_3,\psi_3)$ in $\Dd$
    such that $(\EC_3,\psi_3) = (\fraka_1\fraka_2)\cdot(\EC_0,\psi_0)$
    where $(\EC_i,\psi_i) = \fraka_i\cdot(\EC_0,\psi_0)$
    for $i = 1$ and $2$.
\end{definition}

Solving Vectorization immediately solves Parallelization.
In the opposite direction,
no classical reduction is known,
but the quantum equivalence of these two problems is shown in~\cite{2021/Galbraith--Panny--Smith--Vercauteren}.

An extensive study of the possible classical and quantum attacks on
Vectorization for \(d = 1\)
can be found in~\cite{2018/Castryck--Lange--Martindale--Panny--Renes};
all of these attacks extend to \(d > 1\)
with a slowdown at most polynomial in \(d\)
for class groups of the same size,
with that slowdown due to potentially more complicated
isogeny evaluation and comparison algorithms.
The best classical attack known on Vectorization
is to use random walks in \(\Graph{\Dd}\), exactly as in the \(d=1\) case in~\cite{2016/Delfs--Galbraith},
which gives a solution after an expected \(O((dp)^{1/4})\) isogeny steps.
Since Vectorization is an instance of
the Abelian Hidden Shift Problem, 
the best quantum attack is Kuperberg's algorithm~\cite{2005/Kuperberg,2004/Regev,2013/Kuperberg}
using the Childs--Jao--Soukharev quantum isogeny-evaluation algorithm as a
subroutine~\cite{2014/Childs--Jao--Soukharev},
adapted to push \(\psi\) through the \(\ell\)-isogenies.
This adaptation may incur a practically significant but asymptotically
negligible cost;
the result is a subexponential algorithm running in time $L_{dp}[1/2, \sqrt{2}]$.
Even for \(d = 1\),
there is some debate as to the concrete cost of this quantum
algorithm, and the size of \(p\)
required to provide a cryptographically hard problem instance
for common security levels~\cite{2019/Bernstein--Lange--Martindale--Panny,
                                 2020/Bonnetain--Schrottenloher,
                                 2020/Peikert}.
(If and) when some consensus forms on secure parameter sizes for CSIDH,
the same parameter sizes should make
Vectorization and Parallelization in \(\Dd\) cryptographically hard, too.

We should also consider the impact of the various involutions on \(\Graph{\Dd}\).
The negation involution already exists for \(d = 1\),
where it essentially flips between a curve and its quadratic twist over \(\FF_p\).
This involution has not yet been exploited to give an interesting speedup in
solving Vectorization or Parellization in the case \(d = 1\);
a speedup for any \(d\) would be an interesting result.
For $d > 1$, however, there is at least one new involution: namely, conjugation.
We note that solving Vectorization modulo conjugation
solves Vectorization,
because a vertex and its conjugate are always connected
by the action of an ideal of norm \(d\).
Working modulo conjugation
allows us to shrink search spaces by a factor of \(2\),
yielding a speedup by a factor of up to \(\sqrt{2}\)
analogous to working modulo negation when solving the classical ECDLP
(as in~\cite{2011/Bernstein--Lange--Schwabe}).
When \(d\) has \(n\) prime factors,
we get more involutions
that would allow us to work with equivalence classes of \(2^n\)
vertices,
shrinking the search spaces by a factor of \(2^n\).
Prime \(d\) therefore seems the simplest and strongest case to us.

Finally, we note that if a random walk should wander into a crossroad,
then we have found an isogeny to a supersingular curve with 
much known on its endomorphism ring.
In this case, attacks analogous to that
of~\cite{2016/Galbraith--Petit--Shani--Ti} should apply.
But as we have seen, crossroads are vanishingly rare;
their existence should not create any weakness for schemes based on \(\Graph{\Dd}\),
no more than they do for CSIDH.
	
\subsection{Non-interactive key exchange}
\label{sec:NIKE}

We now describe a non-interactive key-exchange protocol based on
the class group action on \(\Graph{\Dd}\),
generalizing CSIDH (the case \(d = 1\)).
The public parameters are a prime \(p\),
a prime \(d\),
an \(\epsilon\) in \(\{1,-1\}\),
a set of primes \(\{\ell_i\}_{i=1}^n\)
prime to \(dp\) and splitting in \(\QQ(\sqrt{-dp})\),
together with a prime ideal \(\frakell_i\)
above each \(\ell_i\),
and a ``starting'' vertex \((\EC_0,\psi_0)\) in \(\Dd\)
(constructed using the crossroad technique, for example).
We also fix a secret keyspace \(\mathcal{K} \subset \ZZ^n\) of exponent
vectors
such that \(\#\mathcal{K} \ge 2^{2\lambda}\) 
to provide \(\lambda\) bits of security against meet-in-the-middle attacks
(though smaller \(\mathcal{K}\) may suffice: see~\cite{2020/CSCDJRH}).
The prime \(p\) must be large enough
that Vectorization and Parallelization
cannot be solved in fewer than \(2^\lambda\) classical operations,
or a comparable quantum effort.

For key generation,
each user randomly samples their private key as a vector $(e_i)_{1 \le i \le n}$
from $ \mathcal{K} $,
representing
the ideal class $[\fraka] = [\prod_{i = 1}^n \mathfrak{l}_i^{e_i}]$
in $\Cl(\OK) $.
Their public key is a vertex $(\EC, \psi)$ 
representing \([\fraka]\cdot(\EC_0,\psi_0)\),
which we can compute using the methods
of~\S\ref{sub:computation_action}.
The public key may be compressed to a single element of \(\FF_p\) plus a few bits
using the modular techniques of~\S\ref{sec:modular}.

For key exchange,
suppose Alice and Bob have key pairs 
$([\fraka], (\EC_A, \psi_A))$ and 
$([\frakb], (\EC_B, \psi_B))$,
respectively.
Alice receives and validates $(\EC_B,\psi_B)$,
and computes 
$S_{AB} = (\EC_{AB},\psi_{AB}) = [\fraka]\cdot (\EC_B,\psi_B)$;
Bob receives and validates $(\EC_A,\psi_A)$,
and computes
$S_{BA} = (\EC_{BA},\psi_{BA}) = [\frakb]\cdot (\EC_A,\psi_A)$.
The commutativity of the group action 
implies that \(S_{AB} \cong S_{BA}\),
so Alice and Bob have a shared secret \emph{up to isomorphism}.
To obtain a unique shared value for cryptographic key derivation,
they can derive 
a modular ``compressed'' representation of the shared secret
as in~\S\ref{sec:modular}
(for example, when \(d = 2\) or \(3\),
the parameter \(u\) for the family of~\S\ref{sec:Hasegawa-2}
or~\S\ref{sec:Hasegawa-3} and a sign bit suffice),
or simply take \(j(\EC_{AB}) = j(\EC_{BA})\)
with a minimal security loss.

\begin{remark}
    When ideal classes represent cryptographic secrets,
    it is important to compute their actions in constant time.
    A number of techniques have been proposed for this
    in the context of CSIDH
    \cite{2019/Meyer--Campos--Reith,
          2020/Onuki--Aikawa--Yamazaki--Takagi,
          2019/Cervantes-Vazquez--Chenu--Chi-Dominguez--De-Feo--Rodriguez-Henriquez--Smith,
          2020/Campos--Kannwischer--Meyer--Onuki--Stottinger,
          2021/Banegas--Bernstein--Campos--Chou--Lange--Meyer--Smith--Sotakova}.
    Each of these methods generalizes in a straightforward way
    to compute class-group actions on \((d,\epsilon)\)-structures.
    The only real algorithmic difference
    when evaluating an isogeny \(\phi: (\EC,\psi) \to (\EC',\psi')\)
    is that the isogeny \(\psi\) must be pushed through \(\phi\) in constant-time as well.  
    For $d = 2$ and $3$, this amounts to pushing the $x$-coordinate of a single point through the isogeny,
    something that is already part of constant-time CSIDH implementations.
    For $d > 3$ the kernel polynomial of \(\psi\) can be pushed
    through \(\phi\) using symmetric functions.
\end{remark}

\subsection{Key validation and supersingularity testing}
\label{subsection::Validation}

Public key validation is an important step
in many public-key cryptosystems,
notably in non-interactive key exchanges
where it is a defence against active attacks.
In our situation, this amounts to proving that a pair \((\EC,\psi)\)
represents an element of \(\Dd\) (or \(\Ddmax\), or \(\Ddsub\)).
The first step is to check that \((\EC,\psi)\) is a \((d,\epsilon)\)-structure:
specifically, we must check that \(\psi\) is indeed an isogeny from \(\EC\) to \(\pconj{\EC}\)
and that \(\dualof{\psi} = \epsilon\pconj{\psi}\).
This can be done with two \(d\)-isogeny computations, 
which costs very little when \(d\) is small.

Verifying supersingularity is more complicated.
For \(d = 1\) (CSIDH),
we just check whether a curve over \(\FF_p\) has order \(p+1\),
which can be done efficiently by probabilistically generating a point of
order \(m\mid p+1\) with \(m > 4\sqrt{p}\)
(see~\cite[\S5]{2018/Castryck--Lange--Martindale--Panny--Renes}).
But this technique does not extend to \(d > 1\),
where we must check if \(\EC/\FF_{p^2}\) has \((p+\epsilon)^2\) points:
our valid curves have \(\EC(\FF_{p^2}) \cong (\ZZ/(p+\epsilon)\ZZ)^2\), 
and therefore no points with the required order \(> 4p\).

Instead, for \(d > 1\) we can specialize the deterministic supersingularity
test of Sutherland~\cite{2012/Sutherland}.
Let \(\pi_\EC\) be the Frobenius endomorphism of \(\EC/\FF_{p^2}\).
The discriminant of \(\ZZ[\pi_\EC]\) is bounded by \(4p^2\),
so the conductor of \(\ZZ[\pi_\EC]\) in \(\OK\) is bounded by \(2p\);
hence, if \(\EC\) is ordinary,
then the maximal height of the \(2\)-isogeny volcano containing \(\EC\) is \(\log_2(p)+1\).
Sutherland's supersingularity test takes
random non-backtracking \(2\)-isogeny walks 
starting from each of the three \(2\)-isogeny neighbours of \(\EC\).
If \(\EC\) is ordinary,
then at least one of these walks will descend the \(2\)-isogeny volcano,
and will therefore terminate 
(with no non-backtracking step defined over \(\FF_{p^2}\)) 
after at most \(\log_2(p)+1\) steps.
Conversely,
if no walk terminates after \(\log_2(p)+1\) steps,
then \(\EC\) must be supersingular.

In our case,
we know that \(\End(\EC) \supset \ZZ[\mu] \supset \ZZ[\pi_\EC]\),
and the conductor of \(\ZZ[\pi_\EC]\) in \(\ZZ[\mu]\)
is the integer \(|r|\) of Proposition~\ref{prop:r},
which is bounded by \(2\sqrt{p/d}\).
We can therefore reduce the walk length limit from \(\log_2(p)+1\)
to \(\frac{1}{2}(\log_2(p)-\log_2(d)) + 1\).
We can also use the fact that \(\ZZ[\mu] \subset \End(\EC)\)
to ensure that we choose a ``descending'' path
within at most two steps, and omit the other two paths.
Thus, we can determine if a \((d,\epsilon)\)-structure \((\EC,\psi)\) is
supersingular for the cost of computing two \(d\)-isogenies
and \( (\frac{1}{2}(\log_2(p)-\log_2(d)) + 5) \) \(2\)-isogenies.

We can determine whether \((\EC,\psi)\) is in \(\Ddmax\) or \(\Ddsub\)
(if required, and only if \(-dp \equiv 1 \pmod{4}\))
by computing the action of \(\mu\) on the \(2\)-torsion
(at the cost of one or two \(d\)-isogeny evaluations)
or by computing the \(2\)-neighbours of \((\EC,\psi)\) in
\(\ellGraph[2]{\Dd}\).

\subsection{Generalized Delfs--Galbraith algorithms}
\label{sec:Delfs--Galbraith}

Let \(S_p\) be the set of supersingular curves over \(\FF_{p^2}\),
up to isomorphism.
The general supersingular isogeny problem
is, given \(\EC_1\) and \(\EC_2\) in \(S_p\),
to compute an isogeny \(\phi: \EC_1 \to \EC_2\).

In~\cite{2016/Delfs--Galbraith}, Delfs and Galbraith use the subset
of supersingular curves defined over $\FF_{p}$,
which we can identify with \(\Dd[1][1]\),
to improve classical isogeny-finding algorithms based on random walks.
Their algorithm has two phases:
\begin{enumerate}
    \item
        Compute a random non-backtracking isogeny walk from \(\EC_1\) resp. \(\EC_2\)
        until we land on a curve \(\EC_1'\) resp. \(\EC_2'\) in \(\Dd[1][1]\).
        These walks yield isogenies \(\phi_1: \EC_1 \to \EC_1'\)
        and \(\phi_2: \EC_2 \to \EC_2'\).
        The isogeny graph on \(S_p\) has excellent mixing properties,
        and since \(\#S_p \approx p/12\) and \(\#\Dd[1][1] = O(\sqrt{p})\),
        this first phase takes an expected \(O(\sqrt{p})\) random isogeny steps.
    \item
        Find an isogeny \(\phi': \EC_1' \to \EC_2'\)
        using the action of \(\Cl(\QQ(\sqrt{-p}))\) acting on \(\Dd[1][1]\)
        (that is, solve Vectorization with \(d = 1\)).
        Under the Generalized Riemann Hypothesis,
        \(\Cl(\QQ(\sqrt{-p}))\)
        is generated by the set \(\mathcal{L}\) of ideals of prime norm up to $6 \log{(|\Delta|)^2}$,
        where \(\Delta\) is the discriminant of \(\QQ(\sqrt{-p})\)
        (see~\cite{1984/Bach})
        though in practice we do not need so many primes.
        The \(\mathcal{L}\)-isogeny graph on \(\Dd[1][1]\)
        is therefore connected,
        and we can use random walks in this subgraph to construct \(\phi'\).
        By the birthday paradox, this phase takes an expected \(O(\sqrt[4]{p})\)
        random steps before finding the collision yielding \(\phi'\).
\end{enumerate}

The Delfs--Galbraith algorithm exploits
the action of \(\Cl(\QQ(\sqrt{-p}))\) on \(\Dd[1][1]\)
to solve the isogeny problem in~\(S_p\).
We can generalize their algorithm
by replacing the distinguished subgraph \(\Graph{\Dd[1]}\)
with a union of subgraphs \(\sqcup_{d\in D}\Graph{\Dd[d]}\)
where \(D\) is a set of coprime squarefree integers prime to \(p\).
In Phase 1,
we now take random walks from \(\EC_1\) and \(\EC_2\)
into \(\sqcup_{d\in D}\Dd[d]\).%
\footnote{
    To measure the feasability of this attack, we need to estimate the
    average number of steps from a general supersingular elliptic curve
    \(\EC/\FF_{p^2}\) to a curve in (the image of) $\Dd$.
    This distance follows a binomial law $(m, \mathcal{P})$ where $m$ is the number of steps and $\mathcal{P} = \sqrt{d/p}$.
    Hence, the probability $\mathbb{P}(X > 1)$ that we reach at least one
    element in $\Dd$ after $m$ steps from $\EC$ is 
    $\mathbb{P}(X > 1) = 1 - \mathbb{P}(X = 0) = 1 - (1 - \sqrt{d/p})^m$.
    When $d = 1$, this addresses some of the heuristic observations
    in~\cite{2021/Arpin--Camacho-Navarro--Lauter--Lim--Nelson--Scholl--Sotakova},
    notably the distance to the $\FF_p$-spine.
}
In Phase 2, if \(\EC_1'\) is in \(\Dd[d_1]\)
and \(\EC_2'\) is in \(\Dd[d_2]\),
then we need to compute a \((d_1,d_2)\)-crossroad
\(\EC_3'\) and find a path \(\EC_1'\to \EC_3'\) in \(\Dd[d_1]\)
and a path \(\EC_2' \to \EC_3'\) in \(\Dd[d_2]\).
(In particular, we should ensure that there exist supersingular
\((d_1,d_2)\)-crossroads before including \(d_1\) and \(d_2\) in \(D\).)

This is not worthwhile for large \(d\) or large \(D\).
Asymptotically, \(\#\Dd\)
is in \(O((\sum_{d\in D}\sqrt{d})\sqrt{p})\),
so the expected number of steps in Phase~1 
is reduced by a factor of \(O(\sum_{d\in D}\sqrt{d})\).
However, 
the individual steps become more expensive:
if we use modular polynomials to check membership of each~\(\Dd\),
then the number of \(\FF_{p^2}\)-operations per step grows linearly with
\(\sum_{d \in D}d\), overwhelming the benefit of the shorter walks.
Asymptotically, therefore, there is no benefit in taking large \(d\) or
large \(D\) in Phase~1.
(For more analysis of random walks into \((d,\pm1)\)-structures,
in different contexts,
see~\cite{2020/Eisentraeger--Hallgren--Leonardi--Morrison--Park}
and~\cite{2009/Charles--Goren--Lauter}.)

Generalized Delfs--Galbraith can become interesting
for \(D\) consisting of a few small \(d\),
however, precisely because the asymptotic 
\(\kappa(d,p) := \#\Dd/\#\Dd[1] \approx \sqrt{d}\)
no longer holds.
For \(d < 10\), for example,
we can have \(\kappa(d,p)\)
substantially greater than \(\sqrt{d}\)
(and also substantially less than 1).
For example,
if \(p\) is the toy SIDH-type prime
\(2^{52}\cdot3^{33}-1\),
then \(\kappa(5,p) \approx 4.916\).
If we can test for an isomorphism or \(5\)-isogeny to the conjugate
faster than we can compute six \(2\)-isogenies,
then we can take \(D = \{1,5\}\)
and walk into \(\Dd[1]\sqcup\Dd[5]\)
faster than walking into \(\Dd[1]\) alone.
This speedup is counterbalanced by a slowdown in Phase 2,
because walking in \(\Graph{\Dd[5]}\) costs more,
and because the walks there need to be a square-root of \(\kappa(5,p)\)
longer---though we can work modulo conjugation to mitigate this cost.

\subsection{\texorpdfstring{\lowercase{\((d,\epsilon)\)}}{(d,e)}-structures and SIDH graphs}

As we noted above,
the probability of a random walk in the supersingular \(\ell\)-isogeny graph 
hitting a vertex that is the image of a \((d,\epsilon)\)-structure
is very low.
It is even lower when we consider SIDH/SIKE graphs,
which cover only a very small proportion of the full isogeny graph,
resembling trees of walks of short, fixed length.

Nevertheless,
when we look at specific SIKE graphs,
we see that they contain sections of \(\ellGraph[2]{\Dd}\)
and \(\ellGraph[3]{\Dd}\) for various \(d\).
For example,
the starting curve in \texttt{SIKEp434} has a $d$-isogeny to its
conjugate for $d \in D = \{5, 13, 17, 29, 37, 41\}$ (and also for much
higher, but less practical values of~\(d\)).
If we consider the \(2\)-isogeny graph,
then we find that \(\ellGraph[2]{\Dd}\) passes through the starting curve
and continues down through the tree towards a public key
for \(d = 17\) and \(41\).
Hence, if we can find a \(2\)-isogeny path from a \texttt{SIKEp434} public key 
to a vertex in the image of \(\Dd[17]\) or \(\Dd[41]\),
then we have an express route to the starting curve.
Such an attack succeeds in a reasonable time
with only a very small probability,
but it is still devastatingly effective
for a tiny proportion of \texttt{SIKEp434} keys.


\printbibliography 


\end{document}